\newcommand{\Rmnum}[1]{\expandafter\@slowromancap\romannumeral #1@}
\definecolor{revision}{rgb}{0,0,1} 
\newif\if@borderstar
\def\bordermatrix{\@ifnextchar*{%
  \@borderstartrue\@bordermatrix@i}{\@borderstarfalse\@bordermatrix@i*}%
}
\def\@bordermatrix@i*{\@ifnextchar[{\@bordermatrix@ii}{\@bordermatrix@ii[()]}}
\def\@bordermatrix@ii[#1]#2{%
\begingroup
  \m@th\@tempdima8.75\p@\setbox\z@\vbox{%
    \def\cr{\crcr\noalign{\kern 2\p@\global\let\cr\endline }}%
    \ialign {$##$\hfil\kern 2\p@\kern\@tempdima & \thinspace %
    \hfil $##$\hfil && \quad\hfil $##$\hfil\crcr\omit\strut %
    \hfil\crcr\noalign{\kern -\baselineskip}#2\crcr\omit %
    \strut\cr}}%
  \setbox\tw@\vbox{\unvcopy\z@\global\setbox\@ne\lastbox}%
  \setbox\tw@\hbox{\unhbox\@ne\unskip\global\setbox\@ne\lastbox}%
  \setbox\tw@\hbox{%
    $\kern\wd\@ne\kern -\@tempdima\left\@firstoftwo#1%
    \if@borderstar\kern2pt\else\kern -\wd\@ne\fi%
    \global\setbox\@ne\vbox{\box\@ne\if@borderstar\else\kern 2\p@\fi}%
    \vcenter{\if@borderstar\else\kern -\ht\@ne\fi%
    \unvbox\z@\kern -\if@borderstar2\fi\baselineskip}%
    \if@borderstar\kern -2\@tempdima\kern2\p@\else\,\fi\right\@secondoftwo#1 $%
  }\null \;\vbox{\kern\ht\@ne\box\tw@}%
\endgroup
}
\newtheorem{thm}{Theorem}
\newtheorem{lemma}{Lemma}
\newtheorem{cor}{Corollary}
\newtheorem{example}{Example}
\newtheorem{conjecture}{Conjecture}
\newtheorem{defn}{Definition}
\begin{document}

\title{A Construction of Pairwise 
  Co-prime Integer Matrices of Any Dimension and Their Least Common Right Multiple}

\author{Guangpu Guo and Xiang-Gen Xia, \IEEEmembership{Fellow}, \IEEEmembership{IEEE}

\thanks{G. Guo and X.-G. Xia are with the Department of Electrical and Computer Engineering,
  University of Delaware, Newark, DE 19716, USA
  (e-mails: guangpu@udel.edu and xxia@ee.udel.edu).
  This work was supported in part by the National Science Foundation (NSF) under Grant CCF-2246917.
}
}

\maketitle

\begin{abstract}
  Compared with co-prime integers, co-prime integer matrices are more challenging due to the non-commutativity. In this paper, we present a new family of pairwise co-prime integer matrices of any dimension and large size. These matrices are non-commutative and have low spread, i.e., their ratios of peak absolute values to mean absolute values (or the smallest non-zero absolute values) of their components are low.
  When  matrix dimension is larger than $2$, this family of matrices  differs from the existing families, such as circulant, Toeplitz matrices, or triangular matrices, and therefore, offers more varieties in applications. 
  In this paper, we first prove the pairwise coprimality of the constructed matrices, then determine their determinant absolute values, and their least common right multiple (lcrm) with a closed and simple form.
  We also analyze their sampling rates when these matrices are used as sampling matrices for a multi-dimensional signal. The proposed family
  of pairwise co-prime integer matrices may have applications in
  multi-dimensional Chinese remainder theorem (MD-CRT) that can be used to determine integer vectors from their integer vector remainders modulo a set of integer matrix moduli, and also in
  multi-dimensional sparse sensing and multirate systems.
\end{abstract}

\begin{IEEEkeywords}
  Pairwise co-prime integer matrices, least common right multiple (lcrm),
   Smith  form, 
   Chinese remainder theorem (CRT), multi-dimensional CRT (MD-CRT),
   multi-dimensional sampling. 
 
\end{IEEEkeywords}

\IEEEpeerreviewmaketitle


\section{Introduction}\label{s1}

It is well-known that a family of pairwise co-prime integers,
i.e., every pair of integers in the family are co-prime, have important applications in, such as, Chinese remainder theorem (CRT) \cite{crt, crt1} that has many applications in, for example, cryptography and coding theory \cite{crt1, crt4, crt_integer1}, and signal processing \cite{crt2,conv1,crt_integer1,radar_book,sg,gangli1,fft2,wenchaoli,radeee,xiaoli,congling1,congling2,congling3,lugan2}. 
Similarly, pairwise co-prime integer matrices have applications in multi-dimensional CRT (MD-CRT) \cite{MD1, MD2} that
can be used to determine  integer vectors from their  integer vector
remainders modulo a set of  integer matrix moduli.
Note that when the integer matrix moduli can be diagonalized simultaneously, MD-CRT  had appeared in earlier literature \cite{jiawenxian, PPV1}.
However, different from co-prime integers, due to the non-commutativity of matrices, co-prime integer matrices are much more challenging. 

Co-prime integer matrices have been studied
  in \cite{PPV2,pp_matrix1,pp_matrix2,pp_matrix3,coprime_pp,pp_nested}  with applications
in multi-dimensional sparse sensing and multi-dimensional multirate
systems. Most studies in \cite{PPV2,pp_matrix1,pp_matrix2,pp_matrix3,coprime_pp,pp_nested}
are for $2$ by $2$ circulant integer
matrices and their variants and commutative integer matrices, $3$ by $3$
circulant integer matrices, Toeplitz integer matrices, triangular integer matrices and their adjugate matrices.
In particular, a necessary and sufficient condition for two $2\times 2$ integer
matrices are co-prime was obtained in \cite{pp_matrix3},
which is easy to check. 

  Similar to the conventional CRT, in MD-CRT
  pairwise co-prime integer matrices as matrix moduli may play
  an important role
  as well to have a large range of uniquely determinable
  integer vectors from their integer vector remainders
  modulo the matrix moduli.
  In this paper, we present  a new family of pairwise co-prime integer
  matrices of any dimension and large size.
  They are non-commutative and  have  low
  spread, i.e., their ratios of peak absolute values to mean absolute
  values (or the smallest non-zero absolute values) of their components are low.
  We first prove the pairwise coprimality of the matrices in the family
  and then determine their determinant absolute values and
  also their least common right multiple (lcrm) with a closed and simple form.

  When matrix dimension is $2$, the family of co-prime integer
  matrices we construct
  in this paper happen to be a set of Toeplitz integer matrices, and 
  satisfy  the necessary and sufficient condition
  for two $2\times 2$ Toeplitz integer matrices to be co-prime
  obtained in \cite{pp_matrix3}. 
When  matrix dimension is larger than $2$,
the family of co-prime integer matrices we construct in this paper are
much different from those in \cite{PPV2,pp_matrix1,pp_matrix2,pp_matrix3,coprime_pp,pp_nested}. The
key differences are that our  construction of pairwise co-prime
integer matrices in this paper is: 
 i) for any dimension, ii) of large size, iii) not
pairwise commutative, iv) not circulant, and v) not Toeplitz
or triangular matrices. 
In addition, as mentioned earlier, we  determine
the lcrm of the family of pairwise co-prime matrices constructed
in this paper, including the family of $2\times 2$ co-prime integer matrices,
which has not been addressed in any existing literature.

  Note that
  the determinant absolute values of the matrix moduli
  are the sampling rates, i.e., the number of
  sampled points per unit spatial volume, 
  using these matrices as sampling matrices \cite{MDSP, jiawenxian, smith4, PPV1, MD1, MD2} for a multi-dimensional  signal.
  Also, an lcrm $\mathbf{R}$  of the matrix moduli 
  determines the range $\mathcal{N}(\mathbf{R})$
  detailed in 1) in Section \ref{s2}, called the fundamental parallelpiped (FPD)
  of $\mathbf{R}$ \cite{smith4},  of 
  the uniquely determinable integer vectors from their 
  integer vector remainders modulo the matrix moduli \cite{MD1}.
  The determinant absolute value $|\det(\mathbf{R})|$, called the {\em dynamic
    range}, 
  is the number of these uniquely determinable integer vectors, which
  is given  with the specified lcrm $\mathbf{R}$ in this paper 
   for our newly constructed family of
  integer matrices as matrix moduli. 
  
  In this paper, we also show that the sampling rates of our newly
  proposed pairwise co-prime integer matrices as (non-separable)
  sampling matrices in {\em each dimension} are much smaller than
  the maximal ones of the necessary sampling rates  of the conventional
  one dimensional samplings using diagonal (separable) integer
  sampling matrices, 
when their maximal determinant
  absolute values and the
  determinant absolute values of their lcrm matrices, i.e.,
  their dynamic ranges,
  are the same.  This is an advantage of non-separable
  sampling over separable sampling for a multi-dimensional signal.

  This paper is organized as follows.
  In Section \ref{s2}, we 
briefly introduce some necessary notations and preliminaries on
integer matrices including MD-CRT. 
In Section \ref{s3}, we present a novel family of
pairwise co-prime integer matrices of any dimension.
In Section \ref{s4}, we
prove the pairwise coprimality of the integer
matrices in the  constructed family.
In Section \ref{s5}, we determine the determinants of the integer
matrices in the constructed family and their lcrm,
and also analyze their sampling rates.
In Section \ref{s6}, 
we conclude this paper. 

\section{Some Necessary Notations and Preliminaries on Integer Matrices}\label{s2}

$\mathbb{Z}$ denotes the set of all integers and $\mathbb{R}$ denotes the
set of all real numbers. All vectors, such as $\mathbf{n}$, $\mathbf{f}$
and $\mathbf{r}$, 
and matrices, such as $\mathbf{M}$, $\mathbf{N}$ and  $\mathbf{P}$,
in this paper
are $D$ dimensional integer vectors and $D\times D$ dimensional
integer matrices, respectively, i.e.,
$\mathbf{n}$, $\mathbf{f}$, $\mathbf{r}\in  \mathbb{Z}^D$
and $\mathbf{M}$, $\mathbf{N}$, $\mathbf{P}\in \mathbb{Z}^{D\times D}$, unless
otherwise specified.
$\mathbf{I}$ is the $D\times D$ identity matrix, and $\mathbf{0}$
is the all $0$ matrix or vector. $\det(\mathbf{M})$
denotes the determinant of matrix $\mathbf{M}$, and $^{\top}$ stands for
the transpose. And diag stands for a $D\times D$ diagonal matrix. 
For a set $\mathcal{S}$, its cardinality is denoted by
$|\mathcal{S}|$. For two positive integers $n$ and $m$, the remainder of $n$ modulo $m$ is denoted by $\langle n \rangle_{m}$. Below we introduce some necessary concepts
on integer matrices and for details, see, for example,
\cite{matrix,MDSP,jiawenxian,smith2, smith3, smith4,PPV1,PPV2,pp_matrix1,pp_matrix2,pp_matrix3,coprime_pp,pp_nested,MD1, MD2}. These definitions, when reduced to the one-dimensional case, do not affect any of the classical results related to co-prime integers.  

\begin{itemize}
\item[1)] \textbf{Set} $\mathcal{N}(\mathbf{M})$
  called the fundamental parallelpiped (FPD)
  of $\mathbf{M}$ \cite{smith4}: Given a $D \times D$ nonsingular integer matrix $\mathbf{M}$, set $\mathcal{N}(\mathbf{M})$ is defined as
  the following set of integer vectors: 
\begin{equation}\label{1}
\mathcal{N}(\mathbf{M}) = \left\{ \mathbf{k} \mid \mathbf{k} = \mathbf{M} \mathbf{x}, \mathbf{x} \in [0,1)^D \text{ and } \mathbf{k} \in \mathbb{Z}^D \right\}. 
\end{equation}
The number of elements in $\mathcal{N}(\mathbf{M})$ is equal to
the absolute value of the determinant of matrix $\mathbf{M}$, i.e.,
$|\det(\mathbf{M})|$, \cite{MDSP, smith4}.
The FPD of $$\mathbf{N}=\begin{pmatrix}
        2 &3\\ 1&4
    \end{pmatrix}$$ is shown in Fig. 1, where the dashed edges and hollow vertices are not part of the FPD. We refer the reader to [24] for more details about FPD.
\begin{figure}[htbp]
        \centering
        \includegraphics[width=\columnwidth]{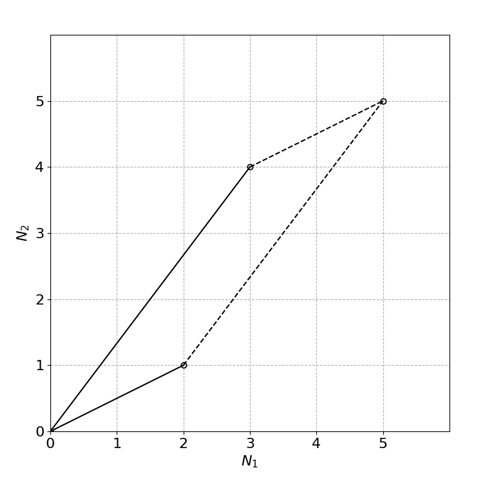}
        \caption{FPD of $\mathbf{N}$}
        \label{fig:fpd}
\end{figure}

\item[2)] \textbf{Unimodular matrix}: A square integer matrix is called unimodular if its determinant is $1$ or $-1$.

\item[3)] \textbf{Divisor and greatest common left divisor (gcld)}: A nonsingular integer matrix $\mathbf{A}$ is a left divisor of an integer matrix $\mathbf{M}$ if $\mathbf{A}^{-1}\mathbf{M}$ is an integer matrix. If $\mathbf{A}$ is a left divisor of each of all $L \geq 2$ integer matrices $\mathbf{M}_1, \mathbf{M}_2, \dots, \mathbf{M}_L$, it is called a common left divisor (cld) of $\mathbf{M}_1, \mathbf{M}_2, \dots, \mathbf{M}_L$. Moreover, if any other cld is a left divisor of $\mathbf{A}$, then $\mathbf{A}$ is a {\em greatest common left divisor} (gcld) of $\mathbf{M}_1, \mathbf{M}_2, \dots, \mathbf{M}_L$. 

\item[4)] \textbf{Co-prime matrices}: Two $D\times D$ integer matrices are left co-prime (or simply co-prime in this paper) if their gcld is a unimodular matrix. For two integer matrices $\mathbf{M}$ and $\mathbf{N}$,  they are left co-prime if and only if
  the Smith form \cite{smith1, smith2} of the combined $D\times 2D$ integer matrix
  $(\mathbf{M} \ \mathbf{N})$ is $(\mathbf{I} \ \mathbf{0})$. In addition, an equivalent necessary and sufficient condition on all the $D\times D$  minors
  of matrix  $(\mathbf{M} \ \mathbf{N})$ was proposed in \cite{pp_matrix3}. Also, it is not hard to see that if the determinant absolute values of two integer
  matrices are co-prime, these two integer
  matrices are co-prime \cite{pp_matrix1}.   
Note that in this paper, only left coprimality is considered. 

\item[5)] \textbf{Multiple and least common right multiple (lcrm)}: A nonsingular integer matrix $\mathbf{A}$ is a right multiple of an integer matrix $\mathbf{M}$, if there exists a nonsingular integer matrix $\mathbf{P}$ such that $\mathbf{A} = \mathbf{M}\mathbf{P}$. If $\mathbf{A}$ is a right multiple of each of all $L \geq 2$ integer matrices $\mathbf{M}_1, \mathbf{M}_2, \dots, \mathbf{M}_L$, $\mathbf{A}$ is called a common right multiple (crm) of $\mathbf{M}_1, \mathbf{M}_2, \dots, \mathbf{M}_L$. Additionally, $\mathbf{A}$ is a {\em least common right multiple} (lcrm) of $\mathbf{M}_1, \mathbf{M}_2, \dots, \mathbf{M}_L$, if any other crm of them,
  is a right multiple of $\mathbf{A}$. If $\mathbf{A}$ is an lcrm of $\mathbf{M}_1,\cdots,\mathbf{M}_L$, $\mathbf{AU}$ is also an lcrm of them when $\mathbf{U}$ is a unimodular matrix, which means that lcrm is not unique but the absolute determinant value of lcrm is unique. And this absolute determinant value is the minimum one of that of all the crms of these matrices. Although the lcrm of $\mathbf{M}_1, \mathbf{M}_2, \dots, \mathbf{M}_L$ is not unique, we use lcrm($\mathbf{M}_1, \mathbf{M}_2, \dots, \mathbf{M}_L$) to denote one fixed representative. This helps simplify later expressions.
  
  From this definition, it is not hard to see
  that for any groups of $D\times D$
  integer matrices $\mathbf{M}_{1,1},\cdots,\mathbf{M}_{1,L_1},
  \cdots, \mathbf{M}_{k,1},\cdots,\mathbf{M}_{k,L_k}$,
  we have
 \begin{equation}\label{lcrm1}
 \begin{aligned}
    &\text{lcrm}(\mathbf{M}_{1,1},\cdots,\mathbf{M}_{1,L_1},
    \cdots, \mathbf{M}_{k,1},\cdots,\mathbf{M}_{k,L_k}) \\
    = &\text{lcrm}( \text{lcrm}(\mathbf{M}_{1,1},\cdots,\mathbf{M}_{1,L_1}),
    \cdots,\\
    & \quad \quad \quad \quad \text{lcrm}(\mathbf{M}_{k,1},\cdots,\mathbf{M}_{k,L_k})).
 \end{aligned}
 \end{equation}
\item[6)] \textbf{Division representation for integer vectors}:
Given a nonsingular integer matrix \( \mathbf{M} \in \mathbb{Z}^{D \times D} \), any integer vector \( \mathbf{f} \in \mathbb{Z}^D \) can be uniquely decomposed as:
\[
\mathbf{f} = \mathbf{M} \mathbf{n} + \mathbf{r},
\]
where \( \mathbf{r} \in \mathcal{N}(\mathbf{M}) \) and \( \mathbf{n} \in \mathbb{Z}^D \). In modulo form, this is represented as:
\[
\mathbf{f} \equiv \mathbf{r} \mod \mathbf{M},
\]
where \( \mathbf{M} \) is a modulus, and \( \mathbf{n} \) and \( \mathbf{r} \) are the folding integer vector and the integer  vector remainder  of \( \mathbf{f} \) modulo \( \mathbf{M} \), respectively.

\item[7)] \textbf{Multi-dimensional undersampling}: Consider the following
  multi-dimensional harmonic signal, \cite{jiawenxian, smith4, PPV1, MD1, MD2},
\begin{equation}\label{3}
x(\mathbf{t}) = a\exp{(j2\pi \mathbf{f}^\top \mathbf{t})} + \omega(\mathbf{t}), \quad \mathbf{t}\in \mathbb{R}^{D},
\end{equation}
where $a$ is an unknown amplitude  and $\mathbf{f} \in \mathbb{Z}^{D}$ is an
unknown $D$ dimensional frequency of the signal, and $\omega(\mathbf{t})$
is an additive noise. We want to determine the $D$ dimensional integer
frequency vector $\mathbf{f}=[N_1,N_2,\cdots,N_D]^\top$
from (possibly multiple) undersampled $D$
dimensional signals of  $x(\mathbf{t})$ with  low sampling rates, 
where all $N_i$ are assumed positive integers and
some of them may be large, i.e., large frequencies.

We use $L$ many $D\times D$ nonsingular integer matrices
$\mathbf{M}_1,\mathbf{M}_2,\cdots,\mathbf{M}_L$ to sample the multi-dimensional
signal in (\ref{3}) in the following sense:
\begin{equation}\label{md-undersample}
x_i[\mathbf{n}] = a\exp{(j2\pi \mathbf{f}^\top \mathbf{M}_i^{-\top} \mathbf{n})} + \omega[\mathbf{M}_i^{-\top} \mathbf{n}],
\end{equation}
$\mathbf{n}\in \mathbb{Z}^{D}$, $1 \leq i \leq L$, where $\mathbf{M}_1,\mathbf{M}_2,\cdots,\mathbf{M}_L$ are 
called {\em sampling matrices}. For each sampling matrix $\mathbf{M}_i$,
 there are  $|\det (\mathbf{M}_i)|$ many sampled
points per unit spatial volume of $\mathbb{R}^{D}$ as we can see
from (\ref{1}), which is called the {\em sampling rate}
(or sampling density) of sampling matrix $\mathbf{M}_i$ for a multi-dimensional signal. 

Next, by performing the  multi-dimensional DFT (MD-DFT) to each $x_i[\mathbf{n}]$ with respect to $n \in \mathcal{N}(\mathbf{M}_i^{\top})$, we have, for $k\in \mathcal{N}(\mathbf{M}_i)$,
\begin{equation}\label{mddft1}
\begin{aligned}
    &X_i(\mathbf{k}) \\ 
    = &a \sum_{\mathbf{n} \in \mathcal{N}(\mathbf{M}_i^{\top})} \exp(j 2 \pi \mathbf{f}^{\top} \mathbf{M}_i^{-\top} \mathbf{n}) \exp(-j 2 \pi \mathbf{k}^{\top} \mathbf{M}_i^{-\top} \mathbf{n})\\
   &+ \Omega_i(\mathbf{k}), \quad 1 \leq i \leq L,
\end{aligned}
\end{equation}
which arrives at 
\begin{equation}\label{mddft}
    X_i(\mathbf{k}) = a |\det(\mathbf{M}_i)| \delta(\mathbf{k} - \mathbf{r}_i) + \Omega_i(\mathbf{k}), \quad 1 \leq i \leq L,
\end{equation}
where $\mathbf{r}_i$ is the integer vector remainder of
the integer frequency vector $\mathbf{f}$ modulo $\mathbf{M}_i$,
and $\delta (\mathbf{n})$ is the discrete delta function that is $1$ when
$\mathbf{n}={\bf 0}$ and $0$ otherwise.
Although the sampling matrices employed in our framework are not necessarily diagonal (i.e., separable sampling), we can obtain their corresponding diagonal matrices by calculating their Smith forms. By applying appropriate input and output signal index transformations, equation (\ref{mddft1}) can be reformulated into an MD-DFT based on diagonal matrices, i.e., the separable case. Then, we can use the Fast Fourier Transform (FFT) for computational acceleration on each dimension, for more details, see \cite{jiawenxian}.
From (\ref{mddft}), one can detect the integer vector remainders
$\mathbf{r}_i \equiv \mathbf{f} \mod \mathbf{M}_i$, $1\leq i\leq L$.
Now the question becomes how to determine the integer frequency vector
$\mathbf{f}$ from these detected integer vector remainders. This can be
solved by using MD-CRT \cite{MD1}.

\item[8)] \textbf{MD-CRT}: [MD-CRT for integer vectors \cite{MD1}]
  Given $L$ matrix moduli $\mathbf{M}_i$ for $1 \leq i \leq L$, which are arbitrary nonsingular integer matrices, let $\mathbf{R}$ be any lcrm of them. For an integer vector $\mathbf{n} \in \mathbb{Z}^D$,  
  it can be uniquely determined  from its $L$ integer vector remainders $\mathbf{r}_i \equiv \mathbf{n} \mod \mathbf{M}_i$, $1\leq i\leq L$, if
  $\mathbf{n} \in \mathcal{N}(\mathbf{R})$.

  A detailed determination algorithm  can be found in \cite{MD1}.
  From this MD-CRT, the range of the uniquely determinable
  integer  vectors $\mathbf{f}$ is $\mathcal{N}(\mathbf{R})$
  for an lcrm $\mathbf{R}$ of integer matrices
  $\mathbf{M}_i$, $1\leq i\leq L$, and the number of
  such uniquely determinable integer  vectors is $|\det(\mathbf{R})|$,
  which is called the {\em dynamic range} of the sampling
  matrices $\mathbf{M}_i$, $1\leq i\leq L$. Clearly one would like to
  have small sampling rates $|\det(\mathbf{M}_i)|$  and large 
  dynamic range $|\det(\mathbf{R})|$.
\item[9)] \textbf{Some applications of multi-dimensional undersampling and MD-CRT}: We present three cases where multi-dimensional undersampling applies. One is the conventional multi-dimensional sampling below the Nyquist rate. This has similar applications as in the one dimensional case, such as sensor networks, and also has applications in computational imaging \cite{compimage}, where the interested 2 dimensional frequencies are too high compared to the sampling rates of the sensors in a group of multiple scattered monitoring sensors with low functionalities, such as low sampling rates and low powers. \\ 
The second is in moving target parameter estimation in synthetic aperture radar (SAR) imaging, where a moving target speed and location parameters appear as frequency components in the radar return signals after some radar signal processing, such as range compression. In SAR imaging, the antenna arrays are fixed on the platform and the spatial sampling rate (corresponding to the time sampling rate) is determined by the fixed distance between adjacent antenna elements. When the target moves fast, the frequency components in the radar return signals may become too large compared to the fixed spatial sampling rate, which causes undersampling.
To address this, a method using two co-prime linear arrays was proposed in \cite{gangli1} to accurately estimate the parameters of fast-moving targets. However, the above linear antenna arrays are not spatially efficient for platforms with limited space, such as an aircraft. A natural way is to use planar antenna arrays, turning a 1D radar return signal into a 2D radar return signal, where co-prime linear arrays turn to co-prime integer matrices. In this setting, the results developed in this paper and the MD-CRT may play a key role. Notably, 2D co-prime planar arrays have already been applied in array signal processing, for example, for direction-of-arrival (DoA) estimation in \cite{sensor} where, although, the co-prime integer matrices used for co-prime arrays are diagonal (or separable).\\
The third application is in recent multi-channel self-reset anlog-to-digtal converter (SRADC)
for complex-valued bandlimited signals \cite{lugan2}. It is a special case of 2D-CRT with 2
dimensional co-prime integer matrix moduli \cite{xiaopingli}.

\end{itemize}

As we can see already from the above definitions, all matrix multiplications
in this paper are
in the sense of left side multiplications. Also, from the above coprimality of integer matrices, the elements of an integer matrix can be any integers
including negative integers, $1$, and $0$. Since a scalar integer can be
thought of as a special integer matrix, i.e., a $1\times 1$ integer matrix,
for the consistence with integer matrices, all integers are considered
for the coprimality. Integers $p$ and $q$ are co-prime if and only if
their gcd is $1$ or $-1$. This relaxation does not affect any results in this
paper.

\section{New Construction of Non-diagonal Pairwise Co-prime Integer Matrices
of Dimension $D$}\label{s3}

We first let 
 $1<q_1 < q_2 <\cdots <q_L$ be 
$L$ pairwise co-prime positive integers and 
 define the following set of $D\times D$ integer matrices 
\begin{equation}\label{14}
  \mathcal{S}_D = \{\mathbf{M} | \mathbf{M} \in \{0,\pm1,\pm 2,
  \cdots,\pm q_L\}^{D\times D} \}.
\end{equation}
We next present a method and an algorithm to construct a family of
pairwise co-prime integer matrices in the set $\mathcal{S}_D$. 
To do so, we first provide a definition.
Let $\mathcal{N}_D \stackrel{\Delta}{=} \{1, 2, \dots, D\}$. A permutation $\sigma$ of $\mathcal{N}_D$ is a one-to-one and onto mapping from $\mathcal{N}_D$ to itself. It can be represented by a vector $\sigma = (\sigma(1), \sigma(2), \dots, \sigma(D))$, where each $\sigma(i)\in \mathcal{N}_D$ and all $\sigma(i)$ are distinct. 

\begin{defn}
  For a given non-empty subset $\mathcal{K}$ of $ \mathcal{N}_D$, a feasible permutation set $\mathcal{P}_{f}(\mathcal{N}_D)$
  of $\mathcal{N}_D$
  is defined as a
  subset of all permutations of $\mathcal{N}_D$:
  \begin{equation}\label{301}
  \begin{aligned}
    \mathcal{P}_f(\mathcal{N}_D)=
    \{\sigma_j \mbox{ is a} \ &\mbox{permutation of }\mathcal{N}_D \mbox{ and its last
      element }\\ &\sigma_j(D)=j, 
    j \in \mathcal{K}\}.
    \end{aligned}
    \end{equation}
\end{defn}

From the above definition, it is not hard to see that
there are a total of 
$$\sum_{d=1}^{D} \binom{D}{d} ((D-1)!)^d$$
feasible permutation sets of $\mathcal{N}_D$.
Any non-empty subset of $d$ elements of $\mathcal{N}_D$
defines a feasible permutation set $\mathcal{P}_f(\mathcal{N}_D)$ of $\mathcal{N}_D$ and its size is $d$ as well. If the whole set $\mathcal{N}_D$ is taken in
defining a feasible permutation set, i.e., all the elements in $\mathcal{N}_D$
are taken as the last components of the permutations in  $\mathcal{P}_f(\mathcal{N}_D)$, the feasible permutation set has the largest size $D$.

For a given $\mathcal{P}_{f}(\mathcal{N}_D)$, we can construct a family of
$D\times D$ pairwise co-prime matrices as follows.

To construct a $D\times D$ integer matrix $\mathbf{M}$,  we begin by selecting a permutation from $\mathcal{P}_{f}(\mathcal{N}_D)$ to place entries of $1$ in
its specific positions. The construction details are  
\begin{itemize}
\item[1)] For any chosen permutation $$\sigma_j = (\sigma_j(1), \sigma_j(2), \cdots, \sigma_j(D))$$ from $\mathcal{P}_{f}(\mathcal{N}_D)$, we set the elements at the $(D-1)$ positions $(\sigma_j(1), \sigma_j(2))$, $(\sigma_j(2), \sigma_j(3))$, $\cdots$, $(\sigma_j(D-1), \sigma_j(D))$ of  matrix $\mathbf{M}$ to $1$.
\item[2)] Next, we choose the diagonal elements from the set of
  pairwise co-prime positive integers $\{q_1, q_2, \cdots, q_L\}$ with $q_1>1$.
  By choosing any integer $q_i$ from this set, we set all diagonal entries of  matrix $\mathbf{M}$  to $q_i$.
  \item[3)] Set all the  other elements of $\mathbf{M}$
    to $0$.
\end{itemize}
    This completes the construction of one $D\times D$ integer matrix. 
    For each permutation from $\mathcal{P}_{f}(\mathcal{N}_D)$, we can create $L$ distinct matrices by choosing different diagonal elements from the
    pairwise co-prime integer set  $\{q_1, q_2, \cdots, q_L\}$.
    Given a $\mathcal{P}_{f}(\mathcal{N}_D)$ with $|\mathcal{P}_{f}(\mathcal{N}_D)|=d$ for some positive integer $d$ with  $1 \leq d \leq D$, this approach allows us to construct a total of $dL$ many $D\times D$ integer matrices.
    In other words, for any given
    feasible permutation set $\mathcal{P}_{f}(\mathcal{N}_D)$ of size $d$,
    $1\leq d\leq D$, 
    we can construct a family of $dL$ many $D\times D$ integer matrices
    that will be shown pairwise co-prime later. 
The construction method is summarized in Algorithm \ref{alg:1}.

\begin{algorithm}
\caption{Pairwise Co-prime Matrices Construction Algorithm}
\label{alg:1}
\begin{algorithmic}
    \State \textbf{Input:} A set of pairwise co-prime positive integers $\{q_1, q_2, \cdots, q_L\}$ with $q_1>1$, dimension $D$ and a feasible permutation set $\mathcal{P}_{f}(\mathcal{N}_D)$ with $|\mathcal{P}_{f}(\mathcal{N}_D)|=d$
    \State \textbf{Output:} A set of $D\times D$ integer matrices $\{\mathbf{M}_1, \mathbf{M}_2, \cdots, \mathbf{M}_{Ld}\}$
    
    \State Initialize an empty list to store matrices
    
    \For{each permutation $\sigma_j = (\sigma_j(1), \sigma_j(2), \cdots, \sigma_j(D))$ from $\mathcal{P}_{f}(\mathcal{N}_D)$}
        \For{each $q_i \in \{q_1, q_2, \cdots, q_L\}$}
            \State Initialize a $D \times D$ matrix $\mathbf{M}$ with zeros
            
            \State Set the element at position $(\sigma_j(1), \sigma_j(1))$ of matrix $\mathbf{M}$ to $q_i$
            
            \For{$k = 2$ to $D$}
                \State Set the element at position $(\sigma_j(k-1), \sigma_j(k))$ of matrix $\mathbf{M}$ to $1$
                \State Set the element at position $(\sigma_j(k), \sigma_j(k))$ of matrix $\mathbf{M}$ to $q_i$
            \EndFor
            
            \State Add matrix $\mathbf{M}$ to the list of matrices
        \EndFor
    \EndFor
    
    \State \Return the list of matrices
\end{algorithmic}
\end{algorithm}

Let $\mathcal{P}_{f}(\mathcal{N}_D)$  be a feasible
permutation set of $\mathcal{N}_D$  of the largest size $D$, such as, 
 the set of all the cyclic permutations of $\mathcal{N}_D$:
\begin{equation}\label{cyc}
\begin{aligned}
    \mathcal{P}_{f}(\mathcal{N}_D) = \{(1,2,&\cdots,D-1,D),(2,3,\cdots,D,1),\cdots,\\
    &(D,1,\cdots,D-2,D-1)\}.
\end{aligned}
\end{equation}
Following the above construction method, for the sake of convenience,
let $\mathbf{M}_{i,j}$ denote the matrix that is constructed
by choosing $q_i$ on its diagonal and choosing the permutation $\sigma_j$ from $\mathcal{P}_{f}(\mathcal{N}_D)$ to position the $1$'s, for $1 \leq i \leq L$ and $1 \leq j \leq D$. We can see that in matrix $\mathbf{M}_{i,j}$, there are a total of $(D-1)$ many $1$'s, and at most a single $1$ per row and at most a single $1$ per column. Especially, the $j$-th row has no $1$ and the $\sigma_{j}(1)$-th column has no $1$. Furthermore, each row has at most
two non-zero elements $q_i$ and $1$ and each column has at most two
non-zero elements $q_i$ and $1$.

Each matrix $\mathbf{M}_{i,j}$ in the above construction can be represented as
\begin{equation}\label{repre}
   \mathbf{M}_{i,j} = q_i\mathbf{I} + \mathbf{A}_j, 
\end{equation}
where $\mathbf{A}_j$ is a binary matrix constructed by following the above
construction Steps 1) and 3) with the chosen permutation $\sigma_j$ from $\mathcal{P}_{f}(\mathcal{N}_D)$. From the representation in (\ref{repre}), it is not hard
to check that $\mathbf{M}_{i_1,j}$ and $\mathbf{M}_{i_2,j}$ are commutative, for any $1\leq i_1,  i_2 \leq L$ and $1 \leq j \leq D$. However,
matrices $\mathbf{M}_{i_1,j_1}$ and $\mathbf{M}_{i_2,j_2}$
for $1\leq j_1\neq j_2\leq D$ in the above construction are not commutative. 

 As an example of matrices $\mathbf{M}_{i,j}$, consider the case 
 when $D=4$ and the chosen feasible permutation set of $\mathcal{N}_4=\{1,2,3,4\}$ is
 $$
 \mathcal{P}_{f}(\mathcal{N}_4)=\{(4,2,3,1),(1,3,4,2),(4,1,2,3),(3,1,2,4)\}.
 $$
 Then, the matrices constructed by
the method (or Algorithm \ref{alg:1})  are:  for $ 1 \leq i \leq L$, 
\begin{equation}\notag
\begin{aligned} 
\mathbf{M}_{i,1} &=
    \begin{pmatrix}
        q_i & 0 & 0 & 0\\
        0 & q_i & 1 & 0\\
        1 & 0 & q_i & 0\\
        0 & 1 & 0 & q_i\\
    \end{pmatrix},
    \mathbf{M}_{i,2} =
    \begin{pmatrix}
        q_i & 0 & 1 & 0\\
        0 & q_i & 0 & 0\\
        0 & 0 & q_i & 1\\
        0 & 1 & 0 & q_i\\
    \end{pmatrix}, \\
    \mathbf{M}_{i,3} &=
    \begin{pmatrix}
        q_i & 1 & 0 & 0\\
        0 & q_i & 1 & 0\\
        0 & 0 & q_i & 0\\
        1 & 0 & 0 & q_i\\
    \end{pmatrix},
    \mathbf{M}_{i,4} =
    \begin{pmatrix}
        q_i & 1 & 0 & 0\\
        0 & q_i & 0 & 1\\
        1 & 0 & q_i & 0\\
        0 & 0 & 0 & q_i\\
    \end{pmatrix}.
\end{aligned}
\end{equation}
From this example, one can see that these matrices
are not circulant, Toeplitz, triangular, or their
variants as studied in \cite{pp_matrix3}.

When $D$ is an even number and the chosen feasible permutation set is
\begin{equation}\label{toep}
\begin{aligned}
    \mathcal{P}_{f}(\mathcal{N}_D) = \{(&1,2,\cdots,D),(2, {\langle2+2\rangle_{D+1}},\cdots,{\langle2D\rangle_{D+1}}),\\&\cdots,
    (D,{\langle D+D \rangle_{D+1}},\cdots,{\langle D^{2}\rangle_{D+1}})\},
\end{aligned}
\end{equation} 
the family constructed by the above method (or Algorithm \ref{alg:1}) happens to be a family of Toeplitz matrices. For example, when $D=4$ and the chosen feasible permutation set is
 \begin{align}\label{toep4}
   \mathcal{P}_{f}(\mathcal{N}_4) &=\{(1,2,3,4),(2,4,1,3),(3,1,4,2),(4,3,2,1)\} \notag \\ 
   &=\{(4,3,2,1),(3,1,4,2),(2,4,1,3),(1,2,3,4)\},
 \end{align}
the constructed matrices are: for $ 1 \leq i \leq L$, 
\begin{equation}\notag
\begin{aligned} 
\mathbf{M}_{i,1} &=
    \begin{pmatrix}
        q_i & 0 & 0 & 0\\
        1 & q_i & 0 & 0\\
        0 & 1 & q_i & 0\\
        0 & 0 & 1 & q_i\\
    \end{pmatrix},
    \mathbf{M}_{i,2} =
    \begin{pmatrix}
        q_i & 0 & 0 & 1\\
        0 & q_i & 0 & 0\\
        1 & 0 & q_i & 0\\
        0 & 1 & 0 & q_i\\
    \end{pmatrix}, \\
    \mathbf{M}_{i,3} &=
    \begin{pmatrix}
        q_i & 0 & 1 & 0\\
        0 & q_i & 0 & 1\\
        0 & 0 & q_i & 0\\
        1 & 0 & 0 & q_i\\
    \end{pmatrix},
    \mathbf{M}_{i,4} =
    \begin{pmatrix}
        q_i & 1 & 0 & 0\\
        0 & q_i & 1 & 0\\
        0 & 0 & q_i & 1\\
        0 & 0 & 0 & q_i\\
    \end{pmatrix},
\end{aligned}
\end{equation}
where the index $j$ in $\mathbf{M}_{i,j}$ corresponds to the last component in a permutation in  $\mathcal{P}_{f}(\mathcal{N}_4)$ in (\ref{toep4}).

When $D=2$, i.e., the two dimension case, the feasible
permutation set with the maximal size $D=2$ has only one possibility, i.e.,
the one in (\ref{toep}) or (\ref{cyc}), and therefore, 
all the constructed matrices
$\mathbf{M}_{i,j}$ happen to be Toeplitz as we will study more later for their
lcrm. In general, our constructed family is not a 
family of Toeplitz matrices and the above case with the 
special feasible permutation set is the only case of Toeplitz matrices. 
Furthermore, we do not use any property of Toeplitz matrices in the following
studies.

For the above constructed family $\{\mathbf{M}_{i,j}: \,\,1\leq i\leq L, 1\leq j\leq D\}$  of $D\times D$ integer matrices, we have the following results
about their pairwise coprimality, determinants,  lcrm, and  dynamic ranges.
As mentioned above for the $2$ dimensional case, 
 our  contructed $2\times 2$ integer matrices happen to be
Toeplitz, every pair of which indeed satisfy the necessary and
sufficient condition for them to be co-prime obtained in
\cite{pp_matrix3}.

\section{Pairwise Coprimality}\label{s4}

In this section, we show the pairwise coprimality of the integer
matrices in the family constructed in the previous section. To do so,
we first present a lemma. 

\begin{lemma}\label{lm:1}
  Let $m_1$ and $m_2$ be two non-zero integers with $gcd(m_1,m_2) = k$ for a positive
  integer $k$.
  For each $i$ with $1 \leq i \leq D$, we can obtain a new matrix $[k\mathbf{e}_i \quad \mathbf{0}]$ by performing elementary column transformations on
   matrix $[m_1\mathbf{e}_i \quad m_2\mathbf{e}_i]$, where $\mathbf{e}_i$ is
  the $D$-dimensional vector with the $i$-th component $1$ and
  the other components $0$.
\end{lemma}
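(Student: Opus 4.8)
The plan is to reduce this to the one-dimensional Bézout identity for integers. Since $\gcd(m_1,m_2)=k$, there exist integers $a,b$ with $am_1+bm_2=k$. The matrix $[m_1\mathbf{e}_i\ \ m_2\mathbf{e}_i]$ has both columns supported only on the $i$-th coordinate, so all column operations we perform will keep this property, and effectively we are just manipulating the $1\times 2$ row $(m_1,\ m_2)$ sitting in the $i$-th row.

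First I would exhibit the column operations explicitly. The standard way to realize Bézout's identity by elementary operations is the Euclidean algorithm: repeatedly subtract an integer multiple of one column from the other, mimicking the steps of $\gcd(m_1,m_2)$ computed by repeated division with remainder; this is a finite sequence of elementary column transformations (each of the form "add an integer multiple of one column to the other"), and it transforms $(m_1\mathbf{e}_i,\ m_2\mathbf{e}_i)$ into $(k\mathbf{e}_i,\ \mathbf{0})$ up to a possible sign and a swap of the two columns. Swapping columns and multiplying a column by $-1$ are themselves elementary column transformations, so after at most one extra swap and one extra sign change we arrive at exactly $[k\mathbf{e}_i\ \ \mathbf{0}]$ with $k>0$. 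Alternatively, I could avoid the Euclidean recursion and argue more slickly: right-multiply $[m_1\mathbf{e}_i\ \ m_2\mathbf{e}_i]$ by the $2\times 2$ unimodular matrix
\begin{equation}\notag
\begin{pmatrix} a & -m_2/k \\ b & m_1/k \end{pmatrix},
\end{equation}
which has determinant $(am_1+bm_2)/k = 1$, hence lies in $\mathrm{SL}_2(\mathbb{Z})$; the product is $[k\mathbf{e}_i\ \ \mathbf{0}]$. Then invoke the standard fact that every element of $\mathrm{SL}_2(\mathbb{Z})$ (indeed $\mathrm{GL}_2(\mathbb{Z})$) is a product of elementary matrices, so right-multiplication by it is a composition of elementary column transformations.

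The main obstacle, such as it is, is purely bookkeeping: making sure that "elementary column transformations" as used in the paper includes column swaps and sign changes (or, if only the "add a multiple of one column to another" operations are allowed, checking that a swap-with-sign can be synthesized from three such operations, the familiar identity that turns $(u,v)\mapsto(v,-u)$). I would state at the outset which elementary operations are permitted and note that the synthesized swap costs only finitely many additional steps, so the conclusion is unaffected. Everything else — the Euclidean algorithm terminating, the $2\times 2$ matrix being unimodular, the support staying on coordinate $i$ — is immediate. I would present the slick $\mathrm{SL}_2(\mathbb{Z})$ argument as the main line and remark that it works uniformly for every $i$ since the construction only touches the $i$-th row.
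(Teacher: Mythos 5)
Your proposal is correct. Your first route --- running the Euclidean algorithm on $m_1,m_2$ as a sequence of ``add an integer multiple of one column to the other'' operations, then fixing up sign and column order at the end --- is essentially identical to the paper's proof, which writes out the division chain explicitly, handles the $|m_1|=|m_2|$ case separately, and notes that the gcd may land in either column depending on the parity of the number of steps. Your preferred main line is genuinely different: you package Bézout's identity $am_1+bm_2=k$ into a single right-multiplication by the matrix $\begin{pmatrix} a & -m_2/k \\ b & m_1/k \end{pmatrix}\in \mathrm{SL}_2(\mathbb{Z})$ and then invoke the standard fact that $\mathrm{GL}_2(\mathbb{Z})$ is generated by elementary matrices. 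That computation checks out ($\det = (am_1+bm_2)/k = 1$, and the product is indeed $[k\mathbf{e}_i \quad \mathbf{0}]$), and it is shorter and more conceptual; what it costs is an appeal to an external generation theorem, whereas the paper's version is fully constructive and exhibits each elementary operation, which matters slightly because the lemma is later applied inside an explicit Smith-form computation in Theorem~\ref{thm:3}. Your bookkeeping remark about whether swaps and sign changes count as elementary column transformations is well taken but harmless either way, since the paper itself uses both freely.
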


\begin{proof}
  When $|m_1| = |m_2|$, by multiplying $1$ or $-1$ to each column of  matrix $[m_1\mathbf{e}_i \quad m_2\mathbf{e}_i]$, we can get a new matrix $[k\mathbf{e}_i \quad k\mathbf{e}_i]$ for $k=|m_1|=|m_2|$. Then, by multiplying $-1$ to the first column and adding it to the second column, we get  matrix $[k\mathbf{e}_i \quad \mathbf{0}]$.

  When $|m_1| \neq |m_2|$, without loss of generality, we assume that $|m_1| > |m_2|$. Apply the Euclidean algorithm to $m_1$ and $m_2$ and  assume that there are $L$ equations here to calculate the gcd of $m_1$ and $m_2$:
\begin{equation}\label{3.2}
    \begin{aligned}
    &m_1 = n_1m_2 + r_1, \quad 0 \leq r_1 < |m_2|, \\
    &m_2 = n_2r_1 + r_2, \quad 0 \leq r_2 < r_1, \\
    &r_1 = n_3r_2 + r_3, \quad 0 \leq r_3 < r_2, \\
    &\vdots \\
    &r_{L-3} = n_{L-1}r_{L-2} + k, \quad 0 \leq k< r_{L-2}, \\
    &r_{L-2} = n_{L}k.
\end{aligned}
\end{equation}

Consider the first equation in (\ref{3.2}), we can get a new matrix $[r_1\mathbf{e}_i \quad m_2\mathbf{e}_i]$ by multiplying the second column of
matrix $[m_1\mathbf{e}_i \quad m_2\mathbf{e}_i]$ by $-n_1$ and adding it to the first column. Then, consider the second equation in (\ref{3.2}), by multiplying the first column of  matrix $[r_1\mathbf{e}_i \quad m_2\mathbf{e}_i]$ by $-n_2$  and adding it to the second column, we can get a new matrix $[r_1\mathbf{e}_i \quad r_2\mathbf{e}_i]$. Following all $L$ equations in (\ref{3.2}), we eventually get a matrix with two columns $k\mathbf{e}_i$ and $\mathbf{0}$
in the form of either $[k\mathbf{e}_i \quad \mathbf{0}]$
or $[\mathbf{0} \quad  k\mathbf{e}_i]$, depending on  $L$ is odd or even.
So, by switching the order of  the two columns if necessary, we
get matrix $[k\mathbf{e}_i \quad \mathbf{0}]$.
\end{proof}

We now consider the pairwise coprimality of the constructed
family  of integer matrices $\{\mathbf{M}_{i,j}\}$. 

\begin{thm}\label{thm:3}
  The matrices $\mathbf{M}_{i,j}$, for $1 \leq i \leq L$ and $1 \leq j \leq D$, 
  are pairwise co-prime.
\end{thm}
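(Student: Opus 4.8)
The plan is to split the pairs $(\mathbf{M}_{i_1,j_1},\mathbf{M}_{i_2,j_2})$ according to whether they share the same diagonal entry. I would begin by computing the determinant: since the binary part $\mathbf{A}_j$ is the adjacency matrix of the directed Hamiltonian path $\sigma_j(1)\to\sigma_j(2)\to\cdots\to\sigma_j(D)$, reordering its rows and columns simultaneously by $\sigma_j$ turns it into the strictly upper-triangular shift matrix, so $\mathbf{A}_j$ is nilpotent and every eigenvalue of $\mathbf{M}_{i,j}=q_i\mathbf{I}+\mathbf{A}_j$ equals $q_i$; hence $|\det(\mathbf{M}_{i,j})|=q_i^{D}$. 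If $i_1\neq i_2$, then $|\det(\mathbf{M}_{i_1,j_1})|=q_{i_1}^{D}$ and $|\det(\mathbf{M}_{i_2,j_2})|=q_{i_2}^{D}$ are co-prime (because $q_{i_1},q_{i_2}$ are), so by the determinant criterion recalled in item~4) of Section~\ref{s2} the two matrices are co-prime. This settles every pair whose diagonal entries differ, in particular every pair $\mathbf{M}_{i_1,j}$ vs.\ $\mathbf{M}_{i_2,j}$.

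It remains to handle $\mathbf{M}_{i,j_1}$ and $\mathbf{M}_{i,j_2}$ with the same $q:=q_i$ and $j_1\neq j_2$ (when $j_1=j_2$ the matrices coincide), which is the crux, since now both determinants equal $q^{D}$ and the determinant criterion is vacuous. Using the Smith-form characterization from item~4) of Section~\ref{s2}, co-primality of $\mathbf{M}_{i,j_1}$ and $\mathbf{M}_{i,j_2}$ is equivalent to the columns of the $D\times 2D$ matrix $(\mathbf{M}_{i,j_1}\ \mathbf{M}_{i,j_2})$ generating all of $\mathbb{Z}^{D}$, i.e.\ to the finite quotient $G:=\mathbb{Z}^{D}/\Lambda$ being trivial, where $\Lambda$ is the lattice those columns span. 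I would read the defining relations of $G$ directly off the columns: writing $\bar{\mathbf{e}}_r$ for the image of $\mathbf{e}_r$ in $G$, for each $\ell\in\{1,2\}$ the column of $\mathbf{M}_{i,j_\ell}$ indexed by $\sigma_{j_\ell}(1)$ gives $q\,\bar{\mathbf{e}}_{\sigma_{j_\ell}(1)}=\mathbf{0}$, and for $k\ge 2$ the column indexed by $\sigma_{j_\ell}(k)$ gives $q\,\bar{\mathbf{e}}_{\sigma_{j_\ell}(k)}+\bar{\mathbf{e}}_{\sigma_{j_\ell}(k-1)}=\mathbf{0}$.

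From the $\sigma_{j_1}$-relations, telescoping downward from $k=D$ (and using $\sigma_{j_1}(D)=j_1$) gives $\bar{\mathbf{e}}_{\sigma_{j_1}(k)}=(-q)^{D-k}\,\bar{\mathbf{e}}_{j_1}$ for all $k$, together with $q^{D}\,\bar{\mathbf{e}}_{j_1}=\mathbf{0}$; in particular $G$ is cyclic, generated by $\bar{\mathbf{e}}_{j_1}$, of order dividing $q^{D}$. Symmetrically, $\bar{\mathbf{e}}_{\sigma_{j_2}(k)}=(-q)^{D-k}\,\bar{\mathbf{e}}_{j_2}$. Since $j_1\neq j_2$ while $\sigma_{j_\ell}(D)=j_\ell$, the index $j_2$ sits at some position $s\le D-1$ in $\sigma_{j_1}$ and $j_1$ at some position $t\le D-1$ in $\sigma_{j_2}$; evaluating the two telescoped formulas there yields $\bar{\mathbf{e}}_{j_2}=(-q)^{D-s}\,\bar{\mathbf{e}}_{j_1}$ and $\bar{\mathbf{e}}_{j_1}=(-q)^{D-t}\,\bar{\mathbf{e}}_{j_2}$, hence $\big(1-(-q)^{a}\big)\bar{\mathbf{e}}_{j_1}=\mathbf{0}$ with $a=(D-s)+(D-t)\ge 2$. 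As $a\ge 1$, the integer $1-(-q)^{a}$ is $\equiv 1\pmod q$, so it is co-prime to $q^{D}$; since $\bar{\mathbf{e}}_{j_1}$ is annihilated by $q^{D}$ and by $1-(-q)^{a}$, it is annihilated by their gcd $1$, i.e.\ $\bar{\mathbf{e}}_{j_1}=\mathbf{0}$. Therefore $G=0$, $\Lambda=\mathbb{Z}^{D}$, and the two matrices are co-prime.

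I expect the same-$q$ case to be the only real obstacle: one must notice that chaining the two path-relation chains around the cycle they jointly form produces the factor $1-(-q)^{a}$, a unit modulo $q$, which is exactly what collapses the quotient. I would also note that this quotient computation can be rephrased as an explicit sequence of elementary row and column operations bringing $(\mathbf{M}_{i,j_1}\ \mathbf{M}_{i,j_2})$ to $(\mathbf{I}\ \mathbf{0})$, in which Lemma~\ref{lm:1} is precisely the tool for merging two collinear columns $m_1\mathbf{e}_r,\,m_2\mathbf{e}_r$ into $\gcd(m_1,m_2)\,\mathbf{e}_r$ and $\mathbf{0}$ at the step where the two paths meet — presumably the route the authors will take.
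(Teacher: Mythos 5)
Your proof is correct, and it takes a genuinely different route from the paper's. You partition the pairs by whether the diagonal values agree: every pair with $i_1\neq i_2$ is dispatched by the determinant criterion (your nilpotency argument --- conjugating $\mathbf{A}_j$ by the permutation matrix of $\sigma_j$ to get the superdiagonal shift --- correctly gives $\det(\mathbf{M}_{i,j})=q_i^{D}$, and is in fact a slicker derivation than the paper's later Laplace-expansion proof of its Theorem~\ref{thm:2}; the paper itself notes this shortcut only afterwards, in Corollary~\ref{cor3}). The paper instead partitions by whether the permutations agree: its first part (same $\sigma_j$, different $q_i$) is an explicit Smith-form reduction using Lemma~\ref{lm:1}, and its second part must treat \emph{all} pairs with $j_1\neq j_2$, including those with $i_1\neq i_2$, by a long chain of elementary operations --- bringing the second block to the anti-diagonal form (\ref{form}), tracking the powers of $q_{i_2}$ that accumulate in the $(2D)$-th, $k_1$-th and $k_2$-th columns, and iteratively reducing those powers until a column whose last entry is $1$ plus a multiple of $q_{i_2}$ can be isolated. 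For the one case you cannot avoid ($i_1=i_2$, $j_1\neq j_2$), your reformulation as triviality of $\mathbb{Z}^{D}/\Lambda$ is clean and complete: the columns of $\mathbf{M}_{i,j_\ell}$ are exactly $q\mathbf{e}_{\sigma_{j_\ell}(1)}$ and $q\mathbf{e}_{\sigma_{j_\ell}(k)}+\mathbf{e}_{\sigma_{j_\ell}(k-1)}$, the telescoping along each Hamiltonian path is right, and chaining the two paths yields the annihilator $1-(-q)^{a}$ with $a\geq 2$, which together with $q^{D}$ annihilates the generator via B\'ezout. Both arguments ultimately rest on the same arithmetic fact --- combining the two path structures produces an element congruent to $1$ modulo $q$ --- but your module-theoretic framing makes this transparent and uniform, whereas the paper's version is constructive (it exhibits the unimodular transformations explicitly, which is useful if one actually wants the Smith form) at the cost of a multi-case bookkeeping over the positions $r_1,r_2$.
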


\begin{proof}
    We divide this proof into two parts. First, we prove that $\mathbf{M}_{i_1,j}$ and $\mathbf{M}_{i_2,j}$ are co-prime, for any $1\leq i_1 \neq i_2 \leq L$ and $1 \leq j \leq D$. Second, we prove that $\mathbf{M}_{i_1,j_1}$ and $\mathbf{M}_{i_2,j_2}$ are co-prime, for any $1\leq i_1, i_2 \leq L$ and $1 \leq j_1 \neq j_2 \leq D$.

    We begin by the first part.     
    Consider two matrices $\mathbf{M}_{i_1,j}$ and $\mathbf{M}_{i_2,j}$ for any given $1\leq i_1 \neq i_2 \leq L$ and $1 \leq j \leq D$. Let $\sigma_j = (m_1,m_{2},\cdots,m_{D})$, where $m_D=j$, and $D\times 2D$ matrix 
    $\mathbf{A} = (\mathbf{M}_{i_1,j} \quad \mathbf{M}_{i_2,j})$. We then calculate the Smith  form of $\mathbf{A}$ step by step.

    In the first step, we can see that the $m_1$-th column of $\mathbf{A}$ is $q_{i_1}\mathbf{e}_{m_1}$, and the $(D+m_1)$-th column of $\mathbf{A}$ is $q_{i_2}\mathbf{e}_{m_1}$. As $q_{i_1}$ and $q_{i_2}$ are co-prime for $i_1\neq i_2$, we 
    can get the new $m_1$-th column $\mathbf{e}_{m_1}$
    and the new $(D+m_1)$-th column $\mathbf{0}$ by applying Lemma \ref{lm:1}.
    In the second step, we can use     vector $\mathbf{e}_{m_1}$ to
    eliminate $1$'s 
    in the positions $(m_1,m_2)$ and $(m_1,D+m_2)$ of matrix $\mathbf{A}$ and get the new $m_2$-th column $q_{i_1}\mathbf{e}_{m_2}$, and the new $(D+m_2)$-th column
    $q_{i_2}\mathbf{e}_{m_2}$. Similarly, we can get the new $m_2$-th column $\mathbf{e}_{m_2}$ and the new $(D+m_2)$-th column $\mathbf{0}$ by applying Lemma \ref{lm:1}. By continuing in this manner, we can always get $q_{i_1}\mathbf{e}_{m_n}$ and $q_{i_2}\mathbf{e}_{m_n}$ in the $m_n$-th and the $(D+m_n)$-th columns 
    in the $n$-th step, respectively, 
    for $3 \leq n \leq D$. Therefore, we can always get two new columns $\mathbf{e}_{m_n}$ and $\mathbf{0}$ in the $n$-th step, for $3 \leq n \leq D$. After $D$ steps, we get all vectors $\mathbf{e}_{n}$, for $1 \leq n \leq D$. Finally, by rearranging the newly obtained columns, we obtain the Smith  form of $\mathbf{A}$ is $(\mathbf{I} \quad \mathbf{0})$. This proves that $\mathbf{M}_{i_1,j}$ and $\mathbf{M}_{i_2,j}$ are co-prime, 
     for any $1\leq i_1 \neq i_2 \leq L$ and $1 \leq j \leq D$.
    
We next      prove the second part.
Consider two matrices $\mathbf{M}_{i_1,j_1}$ and $\mathbf{M}_{i_2,j_2}$ for any given $1\leq i_1, i_2 \leq L$ and $1 \leq j_1 \neq j_2 \leq D$. Let $\mathbf{B} = (\mathbf{M}_{i_1,j_1} \quad \mathbf{M}_{i_2,j_2})$.
We then calculate the Smith  form of $D\times 2D$ matrix 
$\mathbf{B}$.

    Let $\sigma_{j_2}=(m_1,m_{2},\cdots,m_{D})$, where $m_{D}=j_2$. We first show that $\mathbf{M}_{i_2,j_2}$ can be transformed to the form of
    \begin{equation}\label{form}
    \begin{pmatrix}
     0 & 0 & \cdots & 0 & 1 & q_{i_2} \\
    0 & 0 & \cdots & 1 & q_{i_2} & 0 \\
    0 & 0 & \iddots & q_{i_2} & 0 & 0 \\
    \vdots & \iddots & \iddots & \iddots & \vdots & \vdots \\
    1 & q_{i_2} & \iddots & 0 & 0 & 0 \\
    q_{i_2} & 0 & \cdots &0  & 0 & 0 \\   
    \end{pmatrix}
   \end{equation}
    by rearranging the rows and columns.

    We first rearrange the rows of $\mathbf{M}_{i_2,j_2}$.
    After performing some 
    proper row permutations,  we can  
    let the $m_{1}$-th row be the first row, the $m_{2}$-th row be the second row, the $m_{3}$-th row be the third row and so on until the $m_{D}$-th row be the last row. From the construction, the positions of $1$'s in $\mathbf{M}_{i_2,j_2}$ are $(m_1,m_2)$, $(m_2,m_3)$, $\cdots$, $(m_{D-1},m_D)$
    and the positions of $q_{i_2}$'s in $\mathbf{M}_{i_2,j_2}$ are $(m_1,m_1)$, $(m_2,m_2)$, $\cdots$, $(m_{D},m_D)$. In the newly obtained matrix, the positions of $1$'s  are $(1,m_2)$, $(2,m_3)$, $\cdots$, $(D-1,m_D)$ and
    the positions of $q_{i_2}$'s are $(1,m_1)$, $(2,m_2)$, $\cdots$, $(D,m_D)$. Then, we rearrange the columns of the newly 
    obtained matrix by implementing column permutations. 
    Let the $m_{D}$-th column be the first column, the $m_{D-1}$-th column be the second column and so on until the $m_{1}$-th column be the last column. Now, the positions of $1$'s are $(1,D-1)$, $(2,D-2)$, $\cdots$, $(D-1,1)$ and the positions of $q_{i_2}$'s are $(1,D)$, $(2,D-1)$, $\cdots$, $(D,1)$. It is
    the form  in (\ref{form}).

After performing 
the above elementary transformations on  
$\mathbf{B}$, we get a new matrix, denoted by $\mathbf{B}^{'}$, and the
second 
half of $\mathbf{B}^{'}$ has the form of (\ref{form}).
We claim that there must be a $1$ in the last row of $\mathbf{B}^{'}$.
Otherwise, the row in $\mathbf{M}_{i_1,j_1}$ and the row
in $\mathbf{M}_{i_2,j_2}$ that have
no $1$ have the same row number.
Since in this case, from the  construction Step 1),
we know that the $j_1$-th row of $\mathbf{M}_{i_1,j_1}$ has no $1$ and
the $j_2$-th row of $\mathbf{M}_{i_2,j_2}$ has no $1$,
since $\sigma_{j_1}(D)=j_1$ and $\sigma_{j_2}(D)=j_2$.
Because from the construction, there is only one row has no $1$ in a matrix,
we have $j_1 = j_2$, which leads to contradiction.
Also, from the construction, each row has one $q_{i_1}$.
Thus, there are a $1$ and a $q_{i_1}$ in the last row of $\mathbf{B}^{'}$.
Let the column containing the $1$ in the last row be the $k_1$-th column and 
 the column containing the $q_{i_1}$ in the last row be 
 the $k_2$-th column. Without loss of generality, we assume $k_1 < k_2$.
 If not, we can exchange these two columns.
    Therefore, the matrix $\mathbf{B}^{'}$ now has the form of (\ref{B'}) on the top of the next page.
\begin{figure*}[!t]
\begin{equation}\label{B'}
\begin{pmatrix}
     * & \cdots& * & \cdots & * & \cdots & * &  0 & 0 & \cdots & 0 & 1 & q_{i_2} \\
     * & \cdots& * & \cdots & * & \cdots & * & 0 & 0 & \cdots & 1 & q_{i_2} & 0\\
     * & \cdots& * & \cdots & * & \cdots & * & 0 & 0 & \iddots & q_{i_2} & 0 & 0 \\
     \vdots & \vdots & \vdots & \vdots & \vdots & \vdots & \vdots & \vdots & \iddots & \iddots & \iddots & \vdots & \vdots \\
     * & \cdots& * & \cdots & *& \cdots & * & 1 & q_{i_2} & \iddots & 0 & 0 & 0 \\
     * & \cdots&1  & \cdots &q_{i_1}  & \cdots & * & q_{i_2} & 0 & \cdots &0  & 0 & 0 \\
    \end{pmatrix}
\end{equation}
\end{figure*}

    We then calculate the Smith  form step by step. In the first step, we use the $(2D-1)$-th column to eliminate the other non-zero elements
    in the first row by performing elementary column transformations. Then, by using the newly obtained the first row to perform elementary row transformations, we can get the new $(2D-1)$-th column $\mathbf{e}_1$. By continuing in this manner, in each step $n$, for $2 \leq n \leq D-1$, we can always identify a $1$ in the $(2D-n)$-th position of the $n$-th row. Then, using the $(2D-n)$-th column, we perform elementary column transformations to eliminate the other non-zero elements in the $n$-th row. Then, by using the newly obtained the $n$-th row to perform elementary row transformations, we can eliminate the other non-zero elements in this column and get the new $(2D-n)$-th column $\mathbf{e}_{n}$. By now, we have get all vectors $e_{n}$, for $1 \leq n \leq D-1$.
    
    We now consider  the $(2D)$-th column, the $k_1$-th column,
    and the $k_2$-th column after the above $(D-1)$ steps. Based on the previous discussion, each time $1$ is used to eliminate the other non-zero elements in the same row, there exists an adjacent element $q_{i_2}$ in the next row of the same column with the chosen $1$. As a result, when we eliminate a non-zero element from the current row, a new non-zero element is introduced in the next row within the same column with this non-zero element.
    Therefore, in each step $n$, for $1 \leq n \leq D-1$, we need to eliminate the non-zero element $(-1)^{n-1}q_{i_2}^{n}$ in the $n$-th position of the $(2D)$-th column, so the $(2D)$-th column becomes $[0,\cdots,0,(-1)^{D-1}q_{i_2}^{D}]^{\top}$ after $(D-1)$ steps.

    From the construction, the $k_1$-th column of $\mathbf{B}^{'}$ must has a single $1$, a single $q_{i_1}$, and $(D-2)$ zeros. Assume this  $q_{i_1}$
    in the $k_1$-th column is in the $r_1$-th row, $1 \leq r_1 \leq D-1$.
    All the elements in the column are zero for the first $(r_1-1)$ rows, meaning that the first $(r_1-1)$ steps have no effect on this column.
    Similar to the $(2D)$-th column, from step $r_1$ through step $D-1$,
    in each  step we  apply elementary column transformations to eliminate the non-zero element that appears in this column.  
    Then, the new $k_1$-th column after the $(D-1)$ steps
    is
    $$[0,\cdots,0,1+(-1)^{D-r_1}q_{i_1}q_{i_2}^{D-r_1}]^{\top}.$$

    The $k_2$-th column of $\mathbf{B}^{'}$ may have or may not have a $1$. Next, we will discuss these two cases separately and prove that, in both cases, we can obtain $\mathbf{e}_D$ by performing elementary column transformations on these three columns.
    
    \textbf{Case 1}: the $k_2$-th column of $\mathbf{B}^{'}$ has no $1$. 

    In this case, the $k_2$-th column of $\mathbf{B}^{'}$ is $[0,\cdots,0,q_{i_1}]^{\top}$. Since there are no non-zero elements in the first $(D-1)$ positions, the $k_2$-th column remains $[0,\cdots,0,q_{i_1}]^{\top}$ after the $(D-1)$ steps. Then, we can get a column of $\mathbf{e}_D$ by multiplying the $k_2$-th column by $(-1)^{D-r_1+1}q_{i_2}^{D-r_1}$ and adding it to the $k_1$-th column.

    \textbf{Case 2}: the $k_2$-th column of $\mathbf{B}^{'}$ has a $1$.

    In this case, assume that the $1$ is in the $r_2$-th position of the $k_2$-th column, $1 \leq r_2 \leq D-1$. Note that $r_1\neq r_2$, otherwise, there must be $a_1$ and $a_2$ from $\{1,2,\cdots,D\}$ such that the elements at $(a_1,a_2)$ position and $(a_2,a_1)$ position of $\mathbf{M}_{i_1,j_1}$ are all $1$, which is impossible from the  construction. Similar to the $k_1$-th column, the newly obtained $k_2$-th column after the $(D-1)$ steps is $$[0,\cdots,0,q_{i_1}+(-1)^{D-r_2}q_{i_2}^{D-r_2}]^{\top}.$$ 

    Since the power of $q_{i_2}$ in the $(2D)$-th column is larger than that in the $k_2$-th column, we can obtain the new $(2D)$-th column $$[0,\cdots,0,(-1)^{r_2}q_{i_1}q_{i_2}^{r_2}]^{\top}$$ by multiplying $(-1)^{r_2}q_{i_2}^{r_2}$ to the $k_2$-th column and adding it to the $(2D)$-th column.

    If $r_2 \leq D-r_1$, we can get a column of  $\mathbf{e}_{D}$ by multiplying the $(2D)$-th column by 
    \[
    (-1)^{D-r_1-r_2+1}q_{i_2}^{D-r_1-r_2}
    \]
    and adding it to the $k_1$-th column.
    
    If $r_2 > D-r_1$, we multiply the $k_1$-th column by $$(-1)^{r_2-(D-r_1)+1}q_{i_2}^{r_2-(D-r_1)}$$ and add it to the $(2D)$-th column. We then get the new $(2D)$-th column $$[0,\cdots,0,(-1)^{r_2-(D-r_1)+1}q_{i_2}^{r_2-(D-r_1)}]^{\top}.$$

    If $r_2-(D-r_1) \leq D-r_1$, we can obtain a column of
    $\mathbf{e}_D$ by multiplying $$(-1)^{2(D-r_1)-r_2}q_{i_1}q_{i_2}^{2(D-r_1)-r_2}$$ to the $(2D)$-th column and adding it to the $k_1$-th column.

    If $r_2-(D-r_1) > D-r_1$, we then compare $r_2-(D-r_1)$ and $D-r_2$. If $r_2-(D-r_1) \leq D-r_2$, we can get $q_{i_1}\mathbf{e}_{D}$ by multiplying $$(-1)^{(D-r_2)-r_2+(D-r_1)}q_{i_2}^{(D-r_2)-r_2+(D-r_1)}$$ to the $(2D)$-th column and adding it to the $k_2$-th column. Then, we can obtain a column of $\mathbf{e}_{D}$ by multiplying $(-1)^{D-r_1+1}q_{i_2}^{D-r_1}$ to the $k_2$-th column and adding it to the $k_1$-th column.

    From the steps above, it can be observed that when the power of $q_{i_2}$ in the $(2D)$-th column exceeds that in the $k_1$-th column or $k_2$-th column, we can use elementary column transformations with the $k_1$-th column or
    the $k_2$-th column to reduce the power of $q_{i_2}$ in the $(2D)$-th column  by amount of $D-r_1$ or $D-r_2$, respectively. Since $D$ is a finite integer, after a finite number of steps, we are guaranteed to obtain a new $(2D)$-th column $$[0,\cdots,0,(-1)^{b_1}q_{i_1}q_{i_2}^{b_2}]^{\top}$$ with $b_2 \leq D-r_1$ or a new $(2D)$-th column $$[0,\cdots,0,(-1)^{b_3}q_{i_2}^{b_4}]^{\top}$$ with $b_4 \leq D-r_1$ or $b_4 \leq D-r_2$. Then, we can use this $(2D)$-th column to obtain
    a column of $\mathbf{e}_D$ by performing elementary column transformations.

    Thus,    combining the above two cases, we can always get a column of
    $\mathbf{e}_{D}$. Then, by eliminating the other elements in the $D$-th row and rearranging all the columns,  the Smith  form of $\mathbf{B}$ is $(\mathbf{I}\quad \mathbf{0})$. This means that $\mathbf{M}_{i_1,j_1}$ and $\mathbf{M}_{i_2,j_2}$ are co-prime, for any $1\leq i_1, i_2 \leq L$ and $1 \leq j_1 \neq j_2 \leq D$. This proves the second part. 

    By combining the above two parts, we have completed the proof. 
\end{proof}

We next show that the above constructed matrices $\mathbf{M}_{i,j}$
are still pairwise co-prime  when we change some elements of the matrices, which provides much more selections of pairwise co-prime integer
matrices.

\begin{cor}\label{cor:1}
  If the signs of any  (one or more) elements in any matrix within the family constructed by Algorithm \ref{alg:1} are changed, the modified integer
  matrices are still pairwise co-prime. 
\end{cor}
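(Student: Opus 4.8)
The plan is to re-examine the proof of Theorem~\ref{thm:3} and observe that, once the signs of the entries are allowed to be arbitrary, \emph{none} of its steps actually changes: every manipulation there is an elementary row or column operation, the off-diagonal pivots $1$ merely become units $\pm1$, the diagonal entries $q_i$ become $\pm q_i$ (same absolute value), and all the combinatorial facts used (there is exactly one row and one column of $\mathbf{M}_{i,j}$ free of off-diagonal entries, no two positions $(a,b)$ and $(b,a)$ with $a\neq b$ are both occupied, the free rows of two of these matrices coincide iff $j_1=j_2$) depend only on \emph{positions}, which are untouched by sign changes. Write $\widetilde{\mathbf{M}}_{i,j}$ for a matrix obtained from $\mathbf{M}_{i,j}$ by flipping the signs of an arbitrary set of its entries. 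As in the proof of Theorem~\ref{thm:3}, I would prove co-primality of a pair $\widetilde{\mathbf{M}}_{i_1,j_1},\widetilde{\mathbf{M}}_{i_2,j_2}$ by splitting into the case $i_1\neq i_2$ and the case $i_1=i_2,\ j_1\neq j_2$. (For the first case one could simply rerun the first part of the proof of Theorem~\ref{thm:3}, since Lemma~\ref{lm:1} is already stated for arbitrary nonzero integers and hence applies to the pivots $\pm q_{i_1}\mathbf{e}_i,\ \pm q_{i_2}\mathbf{e}_i$; but there is a shorter route.)

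For $i_1\neq i_2$, note first that any sign-changed matrix still has determinant absolute value $q_i^D$. Indeed, letting $P_{\sigma_j}$ be the permutation matrix with $P_{\sigma_j}\mathbf{e}_k=\mathbf{e}_{\sigma_j(k)}$, the conjugate $P_{\sigma_j}^{-1}\widetilde{\mathbf{M}}_{i,j}P_{\sigma_j}$ has entry $(a,b)$ equal to the $(\sigma_j(a),\sigma_j(b))$ entry of $\widetilde{\mathbf{M}}_{i,j}$, so its nonzero off-diagonal entries lie exactly in the super-diagonal positions $(k,k+1)$, $1\le k\le D-1$, and it is therefore upper bidiagonal with diagonal entries $\pm q_i$; hence $\det\widetilde{\mathbf{M}}_{i,j}=\det(P_{\sigma_j}^{-1}\widetilde{\mathbf{M}}_{i,j}P_{\sigma_j})=\pm q_i^D$. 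Consequently, when $i_1\neq i_2$ the determinant absolute values $q_{i_1}^D$ and $q_{i_2}^D$ are co-prime, and by the criterion recalled in item~4) of Section~\ref{s2} (two integer matrices with co-prime determinant absolute values are co-prime \cite{pp_matrix1}), $\widetilde{\mathbf{M}}_{i_1,j_1}$ and $\widetilde{\mathbf{M}}_{i_2,j_2}$ are co-prime. This disposes of all pairs with $i_1\neq i_2$, whatever the permutations.

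The remaining case $i_1=i_2=i$, $j_1\neq j_2$ is handled by replaying the second part of the proof of Theorem~\ref{thm:3} with signs carried along. Reducing $\widetilde{\mathbf{M}}_{i,j_2}$ to the anti-triangular shape of (\ref{form}) uses only row and column permutations, so it goes through, and the resulting matrix $\mathbf{B}'$ in (\ref{B'}) is obtained verbatim, with $\pm1$ and $\pm q_i$ in place of $1$ and $q_i$ in the same slots; the production of the columns $\mathbf{e}_1,\dots,\mathbf{e}_{D-1}$ uses only that the successive pivots are units, which still holds, and the argument that $r_1\neq r_2$ and that the last row of $\mathbf{B}'$ contains a pivot is purely positional and hence unaffected. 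The single delicate point, and the one I expect to be the main obstacle, is the bookkeeping of the three surviving columns (the $(2D)$-th, the $k_1$-th and the $k_2$-th): their bottom entries now read, respectively, a signed power of $q_i$, a $\pm1$ plus a signed power of $q_i$, and a sum of two signed powers of $q_i$ --- the very same monomials as in the original proof, but each carrying its own sign. This is, however, not a genuine obstruction: every reduction step in that argument either cancels a matched pair of monomials --- and the scalar one multiplies a column by always carries a free $\pm$ that one picks precisely so the cancellation occurs --- or else lowers the exponent of $q_i$ in the $(2D)$-th column by a fixed positive amount; since $D$ is finite, after finitely many steps one of the three columns becomes $u\,\mathbf{e}_D$ for some unit $u=\pm1$. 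A unit in the $D$-th coordinate is all that is needed to clear the last row, so the Smith form of $(\widetilde{\mathbf{M}}_{i,j_1}\ \widetilde{\mathbf{M}}_{i,j_2})$ is $(\mathbf{I}\ \mathbf{0})$, i.e.\ the two matrices are co-prime. Combining the two cases proves Corollary~\ref{cor:1}.
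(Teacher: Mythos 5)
Your proposal is correct, and its second half (the case $i_1=i_2$, $j_1\neq j_2$) is essentially the paper's own proof: the paper likewise replays the second part of the proof of Theorem~\ref{thm:3} on $\mathbf{B}_1$, notes that the positional facts and the unit pivots survive sign changes, tracks the three surviving columns whose bottom entries are the same monomials up to signs, and invokes the same finite power-reduction argument. Where you genuinely diverge is the first case. The paper reruns the Smith-form computation of the first part of Theorem~\ref{thm:3} on $\mathbf{A}_1$, using Lemma~\ref{lm:1} on the pivots $\pm q_{i_1}\mathbf{e}_{m_n}$, $\pm q_{i_2}\mathbf{e}_{m_n}$; you instead dispatch \emph{all} pairs with $i_1\neq i_2$ (not just those with $j_1=j_2$) by showing $|\det\widetilde{\mathbf{M}}_{i,j}|=q_i^D$ via conjugation by the permutation matrix $P_{\sigma_j}$ to an upper bidiagonal form, and then citing the sufficient condition that co-prime determinant absolute values imply co-prime matrices. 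This is a cleaner and slightly stronger first case, and it is in fact the route the paper itself arrives at later (Corollary~\ref{cor:2} establishes $\det=\pm q_i^D$ under sign changes, and Corollary~\ref{cor3} with the remark following it observes that the determinant argument subsumes the first part of Theorem~\ref{thm:3}); the paper's in-place Smith-form rerun merely keeps Section~\ref{s4} self-contained without forward-referencing the determinant results of Section~\ref{s5}, whereas you prove the needed determinant fact independently inside your argument. Your bidiagonal-conjugation computation is sound, your case split covers all pairs, and the level of detail in the delicate column bookkeeping matches that of the paper's own proof, so there is no gap.
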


\begin{proof}
     For the proof  we only need to simply revise the proof of Theorem \ref{thm:3} and show that these changes don't affect their coprimality by proving that these changes don't affect the calculations of the Smith  forms in the proof of Theorem \ref{thm:3}.

     First, we show that if we change the signs of  any  elements in the matrix $\mathbf{A}$ of the proof of Theorem \ref{thm:3}, we can also obtain the Smith
     form of the newly obtained matrix, denoted by $\mathbf{A}_1$, as  $(\mathbf{I} \quad \mathbf{0})$.

     Compute the Smith  form of $\mathbf{A}_1$ by following the same steps used in  computing the Smith  form of $\mathbf{A}$. In each step $n$, for $1 \leq n \leq D$, we can always get $q_{i_1}\mathbf{e}_{m_n}$ and $q_{i_2}\mathbf{e}_{m_n}$ in the $m_n$-th and the $(D+m_n)$-th columns by multiplying $1$ or $-1$ to the $m_n$-th and the $(D+m_n)$-th column. Therefore, we can always get $\mathbf{e}_{m_n}$ in the $n$-th step. After $D$ steps, we get all vectors $\mathbf{e}_{n}$, for $1 \leq n \leq D$. Finally, by rearranging the newly obtained columns, we obtain the Smith  form of $\mathbf{A}_{1}$ as $(\mathbf{I} \quad \mathbf{0})$. This proves the first part.

     Second, we show that if we change the signs of any elements in the matrix $\mathbf{B}$ of the proof of Theorem \ref{thm:3}, we can also obtain the Smith  form of the newly obtained matrix, denoted by $\mathbf{B}_1$, as  $(\mathbf{I} \quad \mathbf{0})$.

     When we apply the same transformations to matrix $\mathbf{B}_1$ as we did to matrix $\mathbf{B}$ in the proof of Theorem \ref{thm:3}, we obtain a new matrix $\mathbf{B}_1^{'}$, which can also be obtained by changing the signs of any
     elements in matrix $\mathbf{B}^{'}$. Next, we calculate the Smith form of $\mathbf{B}_1^{'}$ with the same steps in Theorem \ref{thm:3}.

     In each step $n$, for $1\leq n \leq D-1$, we can always identify a $1$ in the $(2D-n)$-th position of the $n$-th row by multiplying $1$ or $-1$ to the $(2D-n)$-th column. Therefore, we can always get $\mathbf{e}_{n}$ in the $n$-th step.
     
     After $(D-1)$ steps, the $(2D)$-th column becomes
$$
     [0,\cdots,0,q_{i_2}^{D}]^{\top} \mbox{ or }
     [0,\cdots,0,-q_{i_2}^{D}]^{\top},
     $$
     the $k_1$-th column becomes
$$
     [0,\cdots,0,1+q_{i_1}q_{i_2}^{D-r_1}]^{\top} \mbox{ or }
     [0,\cdots,0,1-q_{i_1}q_{i_2}^{D-r_1}]^{\top}
     $$
     by multiplying it by $1$ or $-1$.
     If the $k_2$-th column of $\mathbf{B}_{1}^{'}$ has no $1$, it is
$$
     [0,\cdots,0,q_{i_1}]^{\top} \mbox{ or } [0,\cdots,0,-q_{i_1}]^{\top}
     $$
     after $D-1$ steps. We can get $\mathbf{e}_D$ by multiplying the $k_2$-th column by $q_{i_2}^{D-r_1}$ or $-q_{i_2}^{D-r_1}$ and adding it to the $k_1$-th column.
     If the $k_2$-th column of $\mathbf{B}_{1}^{'}$ has a $1$, the newly obtained $k_2$-th column after $(D-1)$ steps is
$$
     [0,\cdots,0,q_{i_1}+q_{i_2}^{D-r_2}]^{\top} \mbox{ or }
     [0,\cdots,0,q_{i_1}-q_{i_2}^{D-r_2}]^{\top}
$$
by multiplying it by $1$ or $-1$.
     Then, we can use the same steps as in the proof of
     Theorem \ref{thm:3} to reduce the power of $q_{i_2}$ in the $(2D)$-th column until we obtain a new $(2D)$-th column
     $$
     [0,\cdots,0,(-1)^{b_1}q_{i_1}q_{i_2}^{b_2}]^{\top}
     $$
     with $b_2 \leq D-r_1$ or a new $(2D)$-th column
     $$
     [0,\cdots,0,(-1)^{b_3}q_{i_2}^{b_4}]^{\top}
     $$
     with $b_4 \leq D-r_1$ or $b_4 \leq D-r_2$. Then, we can use this $(2D)$-th column to obtained $\mathbf{e}_D$ by performing elementary column transformations.
     Finally, by eliminating the other elements in the $D$-th row and rearranging all the columns, we get the Smith  form of $\mathbf{B}_{1}$ as $(\mathbf{I}\quad \mathbf{0})$. This proves the second part.
     
     Combining these two parts, the corollary is proved.
\end{proof}

\section{Determinants and Least Common Right Multiples}\label{s5}

In this section, we first determine the  determinants of the
integer matrices constructed in Section \ref{s3},
which correspond to the sampling rates as mentioned in Introduction 
using the sampling matrices $\mathbf{M}_{i,j}$
in the multi-dimensional sampling problem described in 7) and 8) in
Section \ref{s2}.

\begin{thm}\label{thm:2}
  The determinant of the matrix $\mathbf{M}_{i,j}$
is the product of all the diagonal elements, i.e.,
$\det(\mathbf{M}_{i,j})=q_i^D$,
for $1 \leq i \leq L$ and $1 \leq j \leq D$.
\end{thm}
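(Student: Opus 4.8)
The plan is to read off the determinant from the representation $\mathbf{M}_{i,j}=q_i\mathbf{I}+\mathbf{A}_j$ in (\ref{repre}), together with the observation that $\mathbf{A}_j$ is, up to a simultaneous permutation of rows and columns, the standard nilpotent shift matrix. First I would recall that by construction Steps 1) and 3) the nonzero entries of $\mathbf{A}_j$ occupy exactly the $D-1$ positions $(\sigma_j(1),\sigma_j(2)),(\sigma_j(2),\sigma_j(3)),\dots,(\sigma_j(D-1),\sigma_j(D))$, i.e.\ they sit along the path $\sigma_j(1)\to\sigma_j(2)\to\cdots\to\sigma_j(D)$, each value occurring once because $\sigma_j$ is a permutation.

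Next I would introduce the permutation matrix $\mathbf{P}$ defined by $\mathbf{P}\mathbf{e}_k=\mathbf{e}_{\sigma_j(k)}$ for $1\le k\le D$. A direct computation gives $(\mathbf{P}^{\top}\mathbf{A}_j\mathbf{P})_{k,\ell}=(\mathbf{A}_j)_{\sigma_j(k),\sigma_j(\ell)}$, which equals $1$ precisely when $\ell=k+1$ with $1\le k\le D-1$ and equals $0$ otherwise. Hence $\mathbf{P}^{\top}\mathbf{A}_j\mathbf{P}=\mathbf{S}$, the superdiagonal shift matrix, and therefore $\mathbf{P}^{\top}\mathbf{M}_{i,j}\mathbf{P}=q_i\mathbf{I}+\mathbf{S}$, which is upper triangular with every diagonal entry equal to $q_i$. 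Since $\mathbf{P}$ is a permutation matrix, $\det(\mathbf{P}^{\top})\det(\mathbf{P})=1$, so
\[
\det(\mathbf{M}_{i,j})=\det(\mathbf{P}^{\top}\mathbf{M}_{i,j}\mathbf{P})=\det(q_i\mathbf{I}+\mathbf{S})=q_i^{D},
\]
which is the claimed value, namely the product of the $D$ diagonal entries.

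There is essentially no hard step here; the only point requiring a line of care is checking that the $1$'s of $\mathbf{A}_j$ genuinely trace out a simple path (no repeated index), which is immediate from $\sigma_j$ being a bijection and is exactly what makes the conjugate equal to $\mathbf{S}$ with no stray entries. As an alternative that avoids conjugation entirely, one can expand $\det(\mathbf{M}_{i,j})$ by the Leibniz formula: column $\sigma_j(1)$ has its unique nonzero entry on the main diagonal, forcing any contributing permutation to fix $\sigma_j(1)$; an easy induction down the list $\sigma_j(2),\sigma_j(3),\dots,\sigma_j(D)$ (each column $\sigma_j(k)$ then has its only still-available nonzero entry on the diagonal) shows the identity permutation is the sole surviving term, again yielding $q_i^{D}$. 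I would present whichever of these two routes is shorter in context, likely the triangularization argument.
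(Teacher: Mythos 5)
Your proof is correct. Your primary route---conjugating by the permutation matrix $\mathbf{P}$ with $\mathbf{P}\mathbf{e}_k=\mathbf{e}_{\sigma_j(k)}$ to turn $\mathbf{A}_j$ into the superdiagonal shift $\mathbf{S}$, so that $\mathbf{M}_{i,j}$ is similar to the upper triangular matrix $q_i\mathbf{I}+\mathbf{S}$---is genuinely different from the paper's argument, which instead performs an iterated Laplace expansion: it expands along the $\sigma_j(1)$-th column (which equals $q_i\mathbf{e}_{\sigma_j(1)}$ because that column carries no $1$), observes that deleting that row and column strips the $1$ out of the $\sigma_j(2)$-th column, and repeats until a $2\times 2$ triangular block remains. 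Your ``alternative'' Leibniz/column-forcing sketch at the end is essentially the paper's proof in disguise. The index bookkeeping in your conjugation step checks out: $(\mathbf{P}^{\top}\mathbf{A}_j\mathbf{P})_{k,\ell}=(\mathbf{A}_j)_{\sigma_j(k),\sigma_j(\ell)}$, and since $\sigma_j$ is a bijection the nonzero entries land exactly at $\ell=k+1$. What the triangularization buys is brevity and structural insight (it exhibits $\mathbf{M}_{i,j}$ as permutation-similar to a Jordan-like block, which also makes the nilpotency of $\mathbf{A}_j$ and the sign-change variant in Corollary~\ref{cor:2} transparent, since altering off-diagonal signs still yields a strictly upper triangular conjugate); what the paper's expansion buys is that its intermediate bookkeeping---which column of each successive submatrix holds a lone $q_i$---is reused almost verbatim in the Smith-form manipulations of Theorem~\ref{thm:3}, so the two proofs share machinery.
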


\begin{proof}
  From the  construction Steps 1) and 2),
  the diagonals of $\mathbf{M}_{i,j}$ are all
  $q_i$ and the chosen permutation is $\sigma_j$. For convenience, let 
   $\sigma_j=(m_1,\cdots,m_{D-1},j)$. Next, we calculate the determinant of $\mathbf{M}_{i,j}$ by applying Laplace expansion.

  From the construction Step 1), we know that
  the $m_1$-th column has no $1$, which means that the $m_1$-th column of $\mathbf{M}_{i,j}$ is $q_i\mathbf{e}_{m_1}$. So, we expand the $\det{(\mathbf{M}_{i,j})}$ by the $m_1$-th column, and we have 
  \begin{align}
    \det(\mathbf{M}_{i,j}) &= (-1)^{m_1 + m_1} q_i \det({\mathbf{M}_{i,j}^{\backslash (m_1)}}) \notag \\
    &= q_i \det({\mathbf{M}_{i,j}^{\backslash (m_1)}}).\notag
  \end{align}
    where ${\mathbf{M}_{i,j}^{\backslash (m_1)}}$ is the $(D-1)\times(D-1)$ submatrix obtained by removing the $m_1$-th column and the $m_1$-th row of
    $\mathbf{M}_{i,j}$.

    After removing the $m_1$-th row  and the $m_1$-th column
    of $\mathbf{M}_{i,j}$,
    from the construction Step 1) we have that the elements in the $m_2$-th column of $\mathbf{M}_{i,j}$ are all zeros expect a single $q_i$,
    since the element $1$ located at the position of $(m_1,m_2)$ in $\mathbf{M}_{i,j}$ has been removed.
    Then, we can identify this column in  matrix ${\mathbf{M}_{i,j}^{\backslash (m_1)}}$. Without loss of generality, we assume that this column is the $k_1$-th column of ${\mathbf{M}_{i,j}^{\backslash (m_1)}}$, for $1\leq k_1 \leq D-1$.
    Besides,  all $q_i$'s of ${\mathbf{M}_{i,j}^{\backslash (m_1)}}$ are also in all the diagonals of ${\mathbf{M}_{i,j}^{\backslash (m_1)}}$, because  ${\mathbf{M}_{i,j}^{\backslash (m_1)}}$ is obtained by
    deleting the $m_1$-th row and the $m_1$-th column of $\mathbf{M}_{i,j}$. 
    We then expand $\det{({\mathbf{M}_{i,j}^{\backslash (m_1)}})}$ by the $k_1$-th column and
    have 
    \begin{align}
      \det{({\mathbf{M}_{i,j}^{\backslash (m_1)}})} &= (-1)^{k_1+k_1} q_i \det{({\mathbf{M}_{i,j}^{\backslash (m_1,m_2)}})} \notag\\ 
      &= q_i\det{({\mathbf{M}_{i,j}^{\backslash (m_1,m_2)}})}, \notag 
    \end{align}
    where ${\mathbf{M}_{i,j}^{\backslash (m_1,m_2)}}$ is the $(D-2)\times(D-2)$ submatrix obtained by removing the $k_1$-th column and the $k_1$-th row of ${\mathbf{M}_{i,j}^{\backslash (m_1)}}$,
    which is the same as removing the $m_1$-th and the $m_2$-th rows and the $m_1$-th and the $m_2$-th columns  of $\mathbf{M}_{i,j}$.

    Similarly, for each $3\leq n \leq D-2$, we can always identify a column of all zeros expect a single $q_{i}$ in the $k_{n-1}$-th column of matrix ${\mathbf{M}_{i,j}^{\backslash (m_1,\cdots,m_{n-1})}}$, which is obtained by removing all $m_l$-th rows and all $m_l$-th columns of $\mathbf{M}_{i,j}$ for  $1\leq l \leq n-1$.
    Actually, this column is the $m_n$-th column of the original matrix $\mathbf{M}_{i,j}$. Since all $q_i$'s are in the diagonals of
    matrix $\mathbf{M}_{i,j}$ and the deleted
    rows and columns have the same indices in all the steps
    to get the next new matrices, the remaining $q_i$'s are always
    in the diagonals of matrix ${\mathbf{M}_{i,j}^{\backslash (m_1,\cdots,m_{n-1})}}$. So, we can expand the $\det{({\mathbf{M}_{i,j}^{\backslash (m_1,\cdots,m_{n-1})}})}$ by this column and have 
    \begin{align}
        &\det{({\mathbf{M}_{i,j}^{\backslash (m_1,\cdots,m_{n-1})}})}\notag\\ =& (-1)^{k_{n-1}+k_{n-1}} q_i \det{({\mathbf{M}_{i,j}^{\backslash (m_1,\cdots,m_{n})}})} \notag\\
        =& q_i\det{({\mathbf{M}_{i,j}^{\backslash (m_1,\cdots,m_{n})}})}\notag,
    \end{align}
    where ${\mathbf{M}_{i,j}^{\backslash (m_1,\cdots,m_{n})}}$ is the $(D-n)\times(D-n)$ submatrix obtained by removing all the $m_l$-th columns and all the $m_l$-th rows of $\mathbf{M}_{i,j}$, for $1\leq l \leq n$. 

    Eventually, we can get that 
    \[
    \det{(\mathbf{M}_{i,j})} = q_i^{D-2} \det\left({\begin{bmatrix}
        q_i & 0\\ 1 &q_i\\ 
    \end{bmatrix}}\right)
    \]
    or
    \[
    \det{(\mathbf{M}_{i,j})} = q_i^{D-2} \det\left({\begin{bmatrix}
        q_i & 1\\ 0 &q_i\\ 
    \end{bmatrix}}\right),
    \]
    which means that $\det{(\mathbf{M}_{i,j})} = q_{i}^{D}$, i.e.,
    the product of all the diagonal elements of $\mathbf{M}_{i,j}$.
      This completes the proof of  Theorem \ref{thm:2}.
\end{proof}

Similar to Corollary \ref{cor:1}, from the above proof we immediately have
the following corollary.
\begin{cor}\label{cor:2}
  If the signs of any  (one or more) elements in any matrix $\mathbf{M}_{i,j}$
  within the family constructed by Algorithm \ref{alg:1} are changed,
the determinant of the modified matrix 
is the product of all the diagonal elements, i.e.,
$det(\mathbf{M}_{i,j})=\pm q_i^D$,
where $1 \leq i \leq L$ and $1 \leq j \leq D$.
\end{cor}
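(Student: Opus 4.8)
The plan is to reuse the determinant computation in the proof of Theorem \ref{thm:2} essentially verbatim, observing that the only place signs enter is in the Laplace cofactors and in the $2\times 2$ base case. First I would note that changing the signs of some off-diagonal entries of $\mathbf{M}_{i,j}$ does not disturb the structural facts used in the proof of Theorem \ref{thm:2}: namely, that the $m_1$-th column has a single nonzero entry (a $\pm q_i$ now, instead of $q_i$), that after deleting the $m_1$-th row and column the $m_2$-th column of $\mathbf{M}_{i,j}$ again has a single nonzero entry, and so on inductively, with all diagonal entries remaining $q_i$ throughout because no diagonal sign was changed (or, if one allows diagonal sign changes too, a diagonal entry becomes $\pm q_i$, contributing an extra global sign). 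So the same sequence of single-column Laplace expansions goes through.

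Next I would carry out the expansion: at each step $n$ the relevant column is $c_n\,q_i\,\mathbf{e}_{k}$ for some sign $c_n\in\{+1,-1\}$ and some row/column index $k$, and expanding along it yields a factor $(-1)^{k+k}c_n q_i = c_n q_i$ times the determinant of the next submatrix. Iterating down to the $2\times 2$ base case, one of
\[
\det\begin{bmatrix} q_i & 0\\ \pm 1 & q_i\end{bmatrix}
\quad\text{or}\quad
\det\begin{bmatrix} q_i & \pm 1\\ 0 & q_i\end{bmatrix},
\]
both of which equal $q_i^2$, one gets $\det(\mathbf{M}_{i,j}) = \Big(\prod_{n} c_n\Big)\, q_i^{D}$, which is $\pm q_i^D$. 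Since the sign is $\pm 1$ in any case, the determinant is, up to sign, the product of all diagonal elements $q_i^D$, exactly as claimed; and when only off-diagonal signs are changed, each base-case $2\times 2$ determinant is still exactly $q_i^2$ and the expansion cofactors are all $(-1)^{2k}=+1$, so in fact $\det = q_i^D$ unchanged — the "$\pm$" in the statement covers the more permissive reading where diagonal signs may also be flipped.

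I do not anticipate a genuine obstacle here; the content is entirely contained in the proof of Theorem \ref{thm:2}, and the corollary is really a remark. The one point requiring a sentence of care is making explicit that the inductive identification of a "column with a single nonzero entry" in each reduced submatrix depends only on the \emph{positions} of the $1$'s (equivalently, of the $\pm 1$'s) and of the diagonal entries, not on their signs, so the combinatorial skeleton of the argument is sign-blind. Once that is noted, the computation is the same up to tracking a product of signs, and I would simply state that product equals $\pm 1$ and conclude.
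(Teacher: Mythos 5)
Your proposal is correct and takes essentially the same route as the paper, which gives no separate argument for this corollary but simply states it follows immediately from the proof of Theorem \ref{thm:2}; your observation that the Laplace-expansion skeleton depends only on the positions of the nonzero entries, with sign changes contributing at most a global factor of $\pm 1$ (and off-diagonal flips contributing nothing, since the $2\times 2$ base cases still evaluate to $q_i^2$), is exactly the intended justification.
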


Since $q_1,q_2,\cdots,q_L$ are pairwise co-prime, we can directly obtain the following corollary using one result in 4) of Section \ref{s2}, i.e.,
two integer matrices are co-prime if  their determinants
are co-prime \cite{pp_matrix1}. 

\begin{cor}\label{cor3}
$\mathbf{M}_{i_1,j_1}$ and $\mathbf{M}_{i_2,j_2}$, for any $1 \leq i_1 \neq i_2 \leq L$ and $1 \leq j_1,j_2 \leq D$, are co-prime.
\end{cor}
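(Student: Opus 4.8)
The plan is to reduce the claim to the determinant criterion for coprimality recalled in item 4) of Section \ref{s2}, namely that two integer matrices whose determinant absolute values are co-prime are themselves left co-prime. First I would invoke Theorem \ref{thm:2}, which gives $\det(\mathbf{M}_{i_1,j_1}) = q_{i_1}^D$ and $\det(\mathbf{M}_{i_2,j_2}) = q_{i_2}^D$; note that these determinants depend only on the diagonal entries and not on the permutation indices $j_1$ and $j_2$. Hence it suffices to show that $q_{i_1}^D$ and $q_{i_2}^D$ are co-prime.

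Second, since $i_1 \neq i_2$ and $q_1 < q_2 < \cdots < q_L$ are by hypothesis pairwise co-prime positive integers, $\gcd(q_{i_1}, q_{i_2}) = 1$. By unique factorization of the integers, $q_{i_1}$ and $q_{i_2}$ share no common prime factor, so neither do their powers $q_{i_1}^D$ and $q_{i_2}^D$; equivalently $\gcd(m^k, n^k) = \gcd(m,n)^k$, so $\gcd(q_{i_1}^D, q_{i_2}^D) = 1$. Applying the criterion from \cite{pp_matrix1} to $\mathbf{M}_{i_1,j_1}$ and $\mathbf{M}_{i_2,j_2}$, whose determinant absolute values $q_{i_1}^D$ and $q_{i_2}^D$ are co-prime, then gives that $\mathbf{M}_{i_1,j_1}$ and $\mathbf{M}_{i_2,j_2}$ are left co-prime, which is exactly the assertion of the corollary.

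There is essentially no obstacle here: the corollary is an immediate consequence of Theorem \ref{thm:2} together with an off-the-shelf criterion, and the only elementary fact used is that coprimality of two integers is preserved under taking powers. The point worth emphasizing is that because the determinant is independent of the chosen permutation, coprimality holds for every pair of permutation indices $j_1$ and $j_2$, including $j_1 = j_2$; in particular this furnishes a second, much shorter proof of the "different diagonal" portion of Theorem \ref{thm:3} and complements its "same diagonal, different permutation" portion, which genuinely required the Smith-form computation since in that case the two determinants $q_i^D$ coincide and the criterion does not apply.
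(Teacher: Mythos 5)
Your proof is correct and follows exactly the route the paper takes: Corollary \ref{cor3} is stated immediately after Theorem \ref{thm:2} precisely so that it follows from $|\det(\mathbf{M}_{i,j})|=q_i^D$ together with the determinant-coprimality criterion from \cite{pp_matrix1} cited in item 4) of Section \ref{s2}. Your closing remark also matches the paper's own observation that this gives an alternative proof of the first part of Theorem \ref{thm:3} but cannot replace the Smith-form argument for the case $i_1=i_2$, $j_1\neq j_2$.
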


This corollary directly leads to the proof of the first part
in the proof of Theorem \ref{thm:3} in the previous section, i.e.,
for each fixed $j$, $1\leq j\leq D$, the group of
matrices $\{\mathbf{M}_{i,j}, 1\leq i\leq L\}$ are pairwise co-prime,
which, however, has only $L$ integer matrices. One of our main
contributions of the constructed family $\{\mathbf{M}_{i,j}\}$ in
the previous section 
is being able  to
add $D-1$ many more integer matrices
with the same determinant 
to the pairwise co-prime integer matrix family  for each fixed $i$,
i.e., for each fixed integer $q_i$ in the known set of
co-prime integers $q_i$, $1\leq i\leq L$.
This leads to that the family of pairwise co-prime integer matrices
in our new construction  has $DL$ many matrices. 
In addition, it also implies that the coprimality of determinants
of two integer matrices is only a sufficient  but not necessary condition 
for  two integer matrices to be co-prime. 

With Theorem \ref{thm:2}, we immediately have
the following absolute value of the determinant
of the product of all the matrices in the constructed family:
\begin{equation}\label{determ1}
  \left| \det\left(\prod_{1\leq i\leq L, 1\leq j\leq D} \mathbf{M}_{i,j}\right) \right|
  =(q_1q_2\cdots q_L)^{D^2}.
  \end{equation}

It is known that in one dimensional case, for the given pairwise
co-prime integers $1<q_1<q_2<\cdots<q_L$, their least common multiple
(lcm) is their product $q_1q_2\cdots q_L$
that corresponds to the dynamic range of the conventional CRT
using $q_1,q_2,\cdots,q_L$ as moduli. This means
that all the nonnegative integers within the dynamic range can be
uniquely determined by using CRT from their remainders modulo moduli
$q_i$, $1\leq i\leq L$.
However,
due to the non-communtativity
of the constructed integer matrices $\mathbf{M}_{i,j}$, their
lcrm $\mathbf{R}$ may not be their product. Thus, it is not clear
whether the value in (\ref{determ1}) is the determinant
absolute value of their lcrm, which is the 
 the  dynamic range, i.e., the number of integer vectors
that can be uniquely determined by using MD-CRT and their
integer vector remainders  modulo the matrix moduli as mentioned
in 8) in Section \ref{s2}) 
similar to the conventional CRT \cite{MD1, MD2}. 
Next, we first determine an lcrm of the constructed family $\mathbf{M}_{i,j}$
for the two dimensional case, i.e., $D=2$,
without the need of symbolic computations.

The $2\times 2$ pairwise co-prime integer matrices
constructed by our proposed  method (or Algorithm \ref{alg:1})  are: for $ 1 \leq i \leq L$, 
\begin{equation}\label{D=2} 
\mathbf{M}_{i,1} =
    \begin{pmatrix}
        q_i & 0 \\
        1 & q_i \\
    \end{pmatrix},
    \mathbf{M}_{i,2} =
    \begin{pmatrix}
        q_i & 1 \\
        0 & q_i \\
    \end{pmatrix}.
\end{equation}
As mentioned before, 
the above $2\times 2$ integer matrices happen to be Toeplitz and
do satisfy the necessary and sufficient condition
for two $2\times 2$ integer (Toeplitz)  matrices to be co-prime obtained
in \cite{pp_matrix3}. 

We now calculate  $\text{lcrm}(\mathbf{M}_{i,1},\mathbf{M}_{i,2})$ following \cite{smith2,MD2}, for $ 1 \leq i \leq L$. First, we calculate
\[
\mathbf{M}_{i,1}^{-1}\mathbf{M}_{i,2} = 
\begin{pmatrix}
    1         &1/q_{i}\\
    -1/q_{i}  &(q_{i}^{2}-1)/q_{i}^{2}\\
\end{pmatrix}.
\]
The lcm of the denominators of all the elements in $\mathbf{M}_{i,1}^{-1}\mathbf{M}_{i,2}$ is $q_{i}^{2}$. One can easily check that the Smith form of
matrix $q_{i}^{2}\mathbf{M}_{i,1}^{-1}\mathbf{M}_{i,2}$ is 
\begin{equation*}
   \begin{aligned}
 &\mathbf{U}(q_{i}^{2}\mathbf{M}_{i,1}^{-1}\mathbf{M}_{i,2})\mathbf{V} \\ =&
\begin{pmatrix}
    0 & 1\\
    1 & q_{i}^{3}+q_{i}\\
\end{pmatrix}
\begin{pmatrix}
    q_{i}^{2}         &q_{i}\\
    -q_{i}  &q_{i}^{2}-1\\
\end{pmatrix}
\begin{pmatrix}
    -q_{i} & q_{i}^{2}-1\\
    -1 & q_{i}\\
\end{pmatrix} \\ =& \begin{pmatrix}
    1 &0\\
    0 &q_{i}^{4}\\
\end{pmatrix}.   
\end{aligned} 
\end{equation*}
Let
\[
\mathbf{\Lambda} = \frac{1}{q_i^2}\begin{pmatrix}
   1 &0\\
    0 &q_{i}^{4}\\ 
\end{pmatrix} = \begin{pmatrix}
   \frac{1}{q_i^2} & 0\\
   0 & {q_i^2}\\
\end{pmatrix},
\]
$\mathbf{\Lambda}_{\alpha}$ be a diagonal matrix whose diagonal elements are formed by taking the numerators of all the diagonal elements of $\mathbf{\Lambda}$ and $\mathbf{\Lambda}_{\beta}$ be a diagonal matrix whose diagonal elements are formed by taking the denominators of all the diagonal elements of $\mathbf{\Lambda}$.\footnote{{In the algorithm of calculating an lcrm of two non-singular integer matrices, all fractions used to generate these two diagonal matrices $\Lambda_{\alpha}$ and $\Lambda_{\beta}$ are in irreducible forms.}}
Then, we can get $\mathbf{\Lambda}_{\alpha}=\mathrm{diag}(1,q_{i}^{2})$ and $\mathbf{\Lambda}_{\beta}=\mathrm{diag}(q_{i}^{2},1)$. Therefore,
an lcrm of $\mathbf{M}_{i,1}$ and $\mathbf{M}_{i,2}$ is 
\[
\mathbf{M}_{i,1}\mathbf{U}^{-1}\mathbf{\Lambda}_{\alpha} = 
\mathbf{M}_{i,2}\mathbf{V}\mathbf{\Lambda}_{\beta}=
\begin{pmatrix}
    -q_{i}^{4}-q_{i}^{2} & q_{i}^{3}\\
    -q_{i}^{3} & q_{i}^{2}\\
\end{pmatrix},
\]
and $\det(\text{lcrm}(\mathbf{M}_{i,1},\mathbf{M}_{i,2}))=-q_{i}^{4}$.

Let $\mathbf{R}$ be any lcrm of matrices $\mathbf{M}_{i,1},\mathbf{M}_{i,2}$ for all $1 \leq i \leq L$. From 
(\ref{lcrm1}), $\mathbf{R}$ is also an lcrm of matrices $\text{lcrm}(\mathbf{M}_{i,1},\mathbf{M}_{i,2})$ for all $1 \leq i \leq L$. For the sake of convenience, let $\mathbf{M}_{i}=\text{lcrm}(\mathbf{M}_{i,1},\mathbf{M}_{i,2})$. There exists a nonsingular integer matrix $\mathbf{P}_{i}$ such that $\mathbf{R}=\mathbf{M}_{i}\mathbf{P}_{i}$ for each $1 \leq i \leq L$. Then, we can get
$$
|\det(\mathbf{R})|=|\det(\mathbf{M}_{i})||\det(\mathbf{P}_{i})|=q_{i}^{4}|\det(\mathbf{P}_{i})|,
$$
which means that $|\det(\mathbf{R})|$ has a divisor $q_{i}^{4}$, for every $1 \leq i \leq L$. Since $q_1,q_2,\cdots,q_L$ are pairwise co-prime integers,
we have that $|\det(\mathbf{R})|$ has a divisor $(q_1q_2\cdots q_L)^{4}$. Therefore, $|\det(\mathbf{R})| \geq (q_1q_2\cdots q_L)^{4}$.

We now show the following matrix
\begin{equation}\label{R2}
   \mathbf{R}_2=
\begin{pmatrix}
    (q_1q_2 \cdots q_L)^{2} & 0\\
    0 & (q_1q_2 \cdots q_L)^{2}\\
\end{pmatrix} 
\end{equation}
is an lcrm of matrices $\mathbf{M}_{i,1},\mathbf{M}_{i,2}$ for all $1 \leq i \leq L$. For each matrix $\mathbf{M}_{i,j}$, $1 \leq i \leq L$ and $1\leq j \leq 2$, matrix  
$\mathbf{M}_{i,j}^{-1}\mathbf{R}_2$
is a nonsingular integer matrix since
$$
\mathbf{M}_{i,j}^{-1}=\text{adj}{(\mathbf{M}_{i,j})}/\det(\mathbf{M}_{i,j})
\mbox{ and }
\det(\mathbf{M}_{i,j})=q_i^2.
$$
Therefore, it is a crm of matrices $\mathbf{M}_{i,1},\mathbf{M}_{i,2}$ for all $1 \leq i \leq L$. As $\det(\mathbf{R}_2)=(q_1q_2 \cdots q_L)^{4}$
and for any lcrm $\mathbf{R}$, its determinant absolute
value $\det(\mathbf{R})\geq (q_1q_2\cdots q_L)^4$ as proved above,
$\mathbf{R}_2$ has to be an lcrm of  
matrices $\mathbf{M}_{i,1},\mathbf{M}_{i,2}$ for all $1 \leq i \leq L$.
This proves the following theorem.

\begin{thm}\label{thm:4}
  The matrix $\mathbf{R}_2$ in (\ref{R2}) is an lcrm of the
  $2\times 2$ pairwise co-prime matrices in (\ref{D=2}) 
  constructed by Algorithm \ref{alg:1}. 
\end{thm}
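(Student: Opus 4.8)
The plan is to establish Theorem~\ref{thm:4} by sandwiching the determinant absolute value of an arbitrary common right multiple (crm) between a lower bound coming from the pairwise coprimality of the $q_i$ and the explicit determinant of $\mathbf{R}_2$, and then checking that $\mathbf{R}_2$ is actually a crm, so that it realizes that minimum and hence is an lcrm.

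First I would reduce to the single-diagonal families. For each fixed $i$ with $1\le i\le L$, I would compute one lcrm $\mathbf{M}_i \stackrel{\Delta}{=} \mathrm{lcrm}(\mathbf{M}_{i,1},\mathbf{M}_{i,2})$ and, more importantly, its determinant absolute value. Running the standard algorithm of \cite{smith2,MD2}: form $\mathbf{M}_{i,1}^{-1}\mathbf{M}_{i,2}$, clear the common denominator (which is $q_i^2$), compute a Smith form of the resulting integer matrix $q_i^2\mathbf{M}_{i,1}^{-1}\mathbf{M}_{i,2}$ --- it comes out $\mathrm{diag}(1,q_i^4)$ --- and read off $\mathbf{M}_i$, giving $|\det(\mathbf{M}_i)| = q_i^4$. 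Alternatively, since $\mathbf{M}_{i,1}$ and $\mathbf{M}_{i,2}$ are co-prime by Theorem~\ref{thm:3} and each has determinant $q_i^2$ by Theorem~\ref{thm:2}, I would invoke the known fact that the lcrm of two left-co-prime integer matrices has determinant absolute value equal to the product of the two determinant absolute values, again yielding $|\det(\mathbf{M}_i)| = q_i^4$; whichever of the two routes is cleaner suffices.

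Next comes the lower bound. Let $\mathbf{R}$ be any crm of all $2L$ matrices $\mathbf{M}_{i,j}$. For each fixed $i$, $\mathbf{R}$ is in particular a crm of $\mathbf{M}_{i,1}$ and $\mathbf{M}_{i,2}$, hence a right multiple of their lcrm $\mathbf{M}_i$ by the definition of lcrm in 5) of Section~\ref{s2} (cf.\ the identity (\ref{lcrm1})); thus $\mathbf{R} = \mathbf{M}_i\mathbf{P}_i$ for a nonsingular integer matrix $\mathbf{P}_i$ and $q_i^4 \mid |\det(\mathbf{R})|$. Because $q_1,\dots,q_L$ are pairwise co-prime, so are the powers $q_1^4,\dots,q_L^4$, and therefore $(q_1q_2\cdots q_L)^4$ divides $|\det(\mathbf{R})|$, so $|\det(\mathbf{R})| \ge (q_1q_2\cdots q_L)^4$. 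Since by 5) of Section~\ref{s2} the determinant absolute value of an lcrm is the smallest among those of all crms, every lcrm has determinant absolute value at least $(q_1q_2\cdots q_L)^4$.

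Finally I would check that $\mathbf{R}_2$ in (\ref{R2}) is itself a crm. Using $\mathbf{M}_{i,j}^{-1} = \text{adj}(\mathbf{M}_{i,j})/\det(\mathbf{M}_{i,j})$ together with $\det(\mathbf{M}_{i,j}) = q_i^2$ (Theorem~\ref{thm:2}), we get $\mathbf{M}_{i,j}^{-1}\mathbf{R}_2 = \bigl((q_1q_2\cdots q_L)^2/q_i^2\bigr)\,\text{adj}(\mathbf{M}_{i,j})$; since $q_i$ divides $q_1q_2\cdots q_L$, the scalar factor is a positive integer and $\text{adj}(\mathbf{M}_{i,j})$ has integer entries, so $\mathbf{M}_{i,j}^{-1}\mathbf{R}_2$ is a nonsingular integer matrix and $\mathbf{R}_2$ is a right multiple of each $\mathbf{M}_{i,j}$. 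Hence $\mathbf{R}_2$ is a crm, and $\det(\mathbf{R}_2) = (q_1q_2\cdots q_L)^4$ equals the lower bound just established; therefore $\mathbf{R}_2$ attains the minimal determinant absolute value among crms and must be an lcrm. The one step carrying real content is pinning down $|\det(\mathbf{M}_i)| = q_i^4$ --- either by grinding through the Smith-form lcrm computation for the specific $2\times 2$ pair, or by invoking the co-prime-product fact above; everything after that is divisibility bookkeeping and a one-line integrality check.
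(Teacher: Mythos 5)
Your proposal is correct and follows essentially the same route as the paper: compute $\mathrm{lcrm}(\mathbf{M}_{i,1},\mathbf{M}_{i,2})$ via the Smith-form algorithm of \cite{smith2,MD2} to get $|\det|=q_i^4$, use the divisibility of $|\det(\mathbf{R})|$ by each $q_i^4$ (hence by $(q_1\cdots q_L)^4$) to lower-bound any crm, and then verify via the adjugate that $\mathbf{R}_2$ is a crm attaining that bound. The only cosmetic difference is your optional shortcut invoking the gcld--lcrm determinant-product identity for co-prime matrices, which the paper does not use but which is consistent with its explicit computation.
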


We next study the case when the dimension $D$ is more than $2$, i.e., $D>2$.
For all $3\leq D\leq 75$, when  the feasible permutation set is the set of all cyclic permutations (\ref{cyc}), we have utilized Mathematica to perform symbolic calculations following the algorithm in \cite{smith2,MD2} and determined that for each group of matrices $\{\mathbf{M}_{i,j},1 \leq j \leq D\}$,
$$|\det(\text{lcrm}(\mathbf{M}_{i,j},1 \leq j \leq D))|=q_{i}^{D^2}.$$ 
Similar to the case of $D=2$, the determinant absolute value of an lcrm of all the matrices $\mathbf{M}_{i,j}$, $1\leq i\leq L$ and $1\leq j\leq D$,  in the constructed family
is greater than or equal to $(q_1q_2 \cdots q_L)^{D^2}$. 

Let  $\mathbf{R}_D$ be the
following $D\times D$ diagonal matrix:
\begin{equation}\label{RD}
   \mathbf{R}_D = (q_1q_2 \cdots q_L)^{D} \ \mathbf{I}. 
\end{equation} 
For each matrix $\mathbf{M}_{i,j}$, $1 \leq i \leq L$ and $1\leq j \leq L$,
matrix 
$\mathbf{M}_{i,j}^{-1}\mathbf{R}_D$
is a nonsingular integer matrix
since 
\[
  \mathbf{M}_{i,j}^{-1}=\text{adj}{(\mathbf{M}_{i,j})}/\det(\mathbf{M}_{i,j})
\mbox{ and }
\det(\mathbf{M}_{i,j})=q_i^D.  
\]
Therefore, it is a crm of all the matrices $\mathbf{M}_{i,j}$, $1 \leq i \leq L$ and $1\leq j \leq L$.
Since
\begin{equation}\label{detRD}
   \det(\mathbf{R}_D)=(q_1q_2 \cdots q_L)^{D^2}, 
\end{equation}
$\mathbf{R}_D$ has
to be an lcrm of all the matrices $\mathbf{M}_{i,j}$, 
$1 \leq i \leq L$ and $1\leq j \leq L$, 
in the constructed family.
This proves the following theorem.

\begin{thm}\label{thm5}
    For $3\leq D\leq 75$, the matrix $\mathbf{R}_D$ in (\ref{RD}) is an lcrm of the $D\times D$ pairwise co-prime matrices  
  constructed by Algorithm \ref{alg:1}, where the feasible permutation set is chosen to be the set of all cyclic permutations (\ref{cyc}).  
\end{thm}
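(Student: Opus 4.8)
\medskip
\noindent\textbf{Proof proposal.}
My plan is to avoid the symbolic lcrm computation altogether and instead pin down an lcrm by proving, uniformly in $D$, two facts: (i) every common right multiple $\mathbf{C}$ of the family $\{\mathbf{M}_{i,j}\}$ satisfies $(q_1q_2\cdots q_L)^{D^2}\mid\det(\mathbf{C})$, and (ii) $\mathbf{R}_D$ in (\ref{RD}) is itself a common right multiple, with $\det(\mathbf{R}_D)=(q_1q_2\cdots q_L)^{D^2}$ by (\ref{detRD}). Granting both: if $\mathbf{R}$ is any lcrm, then $\mathbf{R}_D=\mathbf{R}\mathbf{Q}$ for a nonsingular integer matrix $\mathbf{Q}$; taking absolute determinants and applying (i) to $\mathbf{R}$ forces $|\det(\mathbf{Q})|=1$, so $\mathbf{Q}$ is unimodular and $\mathbf{R}_D$ is an lcrm by item 5) of Section \ref{s2}. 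If sound, this route would also remove the restriction $D\le 75$.

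The core is a lemma proved for each fixed $i$: \emph{if an integer matrix $\mathbf{C}$ is a common right multiple of $\mathbf{M}_{i,1},\dots,\mathbf{M}_{i,D}$, then $q_i^{D}$ divides every entry of $\mathbf{C}$.} I would use the representation (\ref{repre}), $\mathbf{M}_{i,j}=q_i\mathbf{I}+\mathbf{A}_j$, where $\mathbf{A}_j$ is the $0/1$ adjacency matrix of the directed path $\sigma_j(1)\to\sigma_j(2)\to\cdots\to\sigma_j(D)$. Since this path is not a cycle, $\mathbf{A}_j$ is nilpotent with $\mathbf{A}_j^{D}=\mathbf{0}$ and $\mathbf{A}_j^{D-1}=\mathbf{e}_{\sigma_j(1)}\mathbf{e}_j^{\top}$ (recall $\sigma_j(D)=j$). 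Hence the Neumann series for $(q_i\mathbf{I}+\mathbf{A}_j)^{-1}$ together with $\det(\mathbf{M}_{i,j})=q_i^{D}$ (Theorem \ref{thm:2}) give $\mathrm{adj}(\mathbf{M}_{i,j})=\sum_{k=0}^{D-1}(-1)^k q_i^{D-1-k}\mathbf{A}_j^{k}$, so $\mathrm{adj}(\mathbf{M}_{i,j})\equiv(-1)^{D-1}\mathbf{A}_j^{D-1}\pmod{q_i}$. Now induct on $t\in\{0,1,\dots,D\}$ over the claim ``$q_i^{t}$ divides every entry of $\mathbf{C}$''. For the step with $t<D$: write $\mathbf{C}=q_i^{t}\mathbf{C}_t$ with $\mathbf{C}_t$ integer; from $\mathbf{M}_{i,j}^{-1}\mathbf{C}=q_i^{-D}\,\mathrm{adj}(\mathbf{M}_{i,j})\mathbf{C}$ being an integer matrix we get $q_i^{D-t}\mid\mathrm{adj}(\mathbf{M}_{i,j})\mathbf{C}_t$ entrywise, hence (reducing modulo $q_i$, using $D-t\ge 1$) $q_i\mid\mathbf{A}_j^{D-1}\mathbf{C}_t$ entrywise; but $\mathbf{A}_j^{D-1}\mathbf{C}_t$ is $\mathbf{e}_{\sigma_j(1)}$ times the $j$-th row of $\mathbf{C}_t$, so the $j$-th row of $\mathbf{C}_t$ is divisible by $q_i$. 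Because the feasible permutation set (\ref{cyc}) has every element of $\mathcal{N}_D$ as a last component, this holds for all $j\in\{1,\dots,D\}$, i.e.\ $q_i\mid\mathbf{C}_t$ entrywise, so $q_i^{t+1}\mid\mathbf{C}$.

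With the lemma, (i) follows: a common right multiple $\mathbf{C}$ of the whole family is in particular one of the subfamily $\{\mathbf{M}_{i,j}:1\le j\le D\}$ for each $i$, so $\mathbf{C}=q_i^{D}\mathbf{C}^{(i)}$ with $\mathbf{C}^{(i)}$ integer and therefore $q_i^{D^2}\mid\det(\mathbf{C})$; pairwise coprimality of $q_1,\dots,q_L$ upgrades this to $(q_1\cdots q_L)^{D^2}\mid\det(\mathbf{C})$. For (ii), $\mathbf{M}_{i,j}^{-1}\mathbf{R}_D=(q_1\cdots q_L)^{D}q_i^{-D}\,\mathrm{adj}(\mathbf{M}_{i,j})$ is an integer matrix since $q_i^{D}\mid(q_1\cdots q_L)^{D}$, so $\mathbf{R}_D$ is a common right multiple, and $\det(\mathbf{R}_D)=(q_1\cdots q_L)^{D^2}$; the unimodular-factor argument of the first paragraph then concludes.

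The main obstacle is the lemma, and within it the two structural identities $\mathbf{A}_j^{D}=\mathbf{0}$, $\mathbf{A}_j^{D-1}=\mathbf{e}_{\sigma_j(1)}\mathbf{e}_j^{\top}$ and the resulting $q_i$-adic form of $\mathrm{adj}(\mathbf{M}_{i,j})$; everything downstream is routine, but these must be checked carefully against construction Steps 1)--3). If one would rather not prove the lemma, the fallback is the paper's own route: establish $|\det(\mathrm{lcrm}(\mathbf{M}_{i,j},1\le j\le D))|=q_i^{D^2}$ (by hand for small $D$, as in the $D=2$ case, or by symbolic computation following \cite{smith2,MD2}) and feed it into (i) through the grouping identity (\ref{lcrm1}) and pairwise coprimality of the $q_i$; that argument, however, only yields the statement for whatever range of $D$ has actually been verified, which explains the cap at $75$.
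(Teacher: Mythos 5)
Your proposal is correct, and it takes a genuinely different route from the paper. The paper's proof is essentially computational: it verifies by symbolic calculation (Mathematica, following \cite{smith2,MD2}) that $|\det(\mathrm{lcrm}(\mathbf{M}_{i,j},1\le j\le D))|=q_i^{D^2}$ for each $3\le D\le 75$, combines these per-$i$ facts via (\ref{lcrm1}) and the pairwise coprimality of the $q_i$ to lower-bound $|\det(\mathbf{R})|$ by $(q_1\cdots q_L)^{D^2}$ for any lcrm $\mathbf{R}$, and then notes that $\mathbf{R}_D$ is a crm attaining this bound --- which is exactly why the theorem carries the cap $D\le 75$ and why Conjecture 1 is left open. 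Your divisibility lemma replaces the symbolic step with a uniform algebraic argument, and its ingredients all check out: since $\sigma_j$ is a permutation, $\mathbf{A}_j^{m}=\sum_{k=1}^{D-m}\mathbf{e}_{\sigma_j(k)}\mathbf{e}_{\sigma_j(k+m)}^{\top}$, giving $\mathbf{A}_j^{D}=\mathbf{0}$ and $\mathbf{A}_j^{D-1}=\mathbf{e}_{\sigma_j(1)}\mathbf{e}_{j}^{\top}$; multiplying $\sum_{k=0}^{D-1}(-1)^k q_i^{D-1-k}\mathbf{A}_j^{k}$ by $q_i\mathbf{I}+\mathbf{A}_j$ telescopes to $q_i^{D}\mathbf{I}$, confirming the adjugate formula; and the induction on $t$ correctly exploits that the cyclic feasible permutation set (\ref{cyc}) has every $j\in\mathcal{N}_D$ as a last component, so every row of $\mathbf{C}_t$ is forced to be divisible by $q_i$. (As a sanity check, the paper's explicit $2\times 2$ lcrm has all entries divisible by $q_i^{2}$, as your lemma predicts.) What your approach buys is substantial: it removes the restriction $3\le D\le 75$ entirely, thereby also settling the paper's Conjecture 1, and it re-derives $|\det(\mathrm{lcrm}(\mathbf{M}_{i,1},\dots,\mathbf{M}_{i,D}))|=q_i^{D^2}$ without computation, since $q_i^{D}\mathbf{I}$ is a crm of that subfamily meeting your lower bound. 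The passage from ``crm of minimal determinant absolute value'' to ``lcrm'' is the one point requiring care, and you handle it correctly via the unimodular factor $\mathbf{Q}$ and item 5) of Section \ref{s2}.
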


For all $D>75$, due to our limited computational power,
although we are not able to confirm the above result, 
we have the following conjecture.
\begin{conjecture}  
  The matrix $\mathbf{R}_D$ in (\ref{RD}) is an lcrm of the $D\times D$ pairwise co-prime matrices  
  constructed by Algorithm \ref{alg:1} for $D>75$, where the feasible permutation set is chosen to be the set of all cyclic permutations (\ref{cyc}).
\end{conjecture}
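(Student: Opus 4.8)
The plan is to bypass the symbolic determinant computation behind Theorem~\ref{thm5} entirely and replace it by a short lattice argument valid for all $D$ simultaneously. Recall that for nonsingular $\mathbf{M},\mathbf{R}\in\mathbb{Z}^{D\times D}$, $\mathbf{R}$ is a right multiple of $\mathbf{M}$ iff $\mathbf{M}^{-1}\mathbf{R}\in\mathbb{Z}^{D\times D}$ iff $\mathbf{R}\mathbb{Z}^{D}\subseteq\mathbf{M}\mathbb{Z}^{D}$. Hence a matrix whose column lattice equals $\bigcap_{\ell}\mathbf{M}_{\ell}\mathbb{Z}^{D}$ is an lcrm of $\mathbf{M}_{1},\dots,\mathbf{M}_{L}$, every lcrm has that column lattice, and $|\det(\mathrm{lcrm})|=[\mathbb{Z}^{D}:\bigcap_{\ell}\mathbf{M}_{\ell}\mathbb{Z}^{D}]$. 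So the conjecture reduces to the lattice identity $\bigcap_{i,j}\mathbf{M}_{i,j}\mathbb{Z}^{D}=(q_{1}\cdots q_{L})^{D}\mathbb{Z}^{D}=\mathbf{R}_{D}\mathbb{Z}^{D}$, and since $\bigcap_{i,j}\mathbf{M}_{i,j}\mathbb{Z}^{D}=\bigcap_{i}\big(\bigcap_{j}\mathbf{M}_{i,j}\mathbb{Z}^{D}\big)$ and $\bigcap_{i}q_{i}^{D}\mathbb{Z}^{D}=(q_{1}\cdots q_{L})^{D}\mathbb{Z}^{D}$ by pairwise coprimality of the $q_{i}$, it suffices to prove, for each fixed $i$, that $\bigcap_{j=1}^{D}\mathbf{M}_{i,j}\mathbb{Z}^{D}=q_{i}^{D}\mathbb{Z}^{D}$.

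Fix $i$, write $q=q_{i}$ and $\mathbf{M}_{j}=\mathbf{M}_{i,j}=q\mathbf{I}+\mathbf{A}_{j}$ with $\mathbf{A}_{j}$ the $0$–$1$ matrix of (\ref{repre}). The structural observation is that for a cyclic permutation $\sigma_{j}$ the $(D-1)$ ones of $\mathbf{A}_{j}$ lie along a single directed path visiting all $D$ coordinates (the edges $(\sigma_{j}(k),\sigma_{j}(k+1))$, $k=1,\dots,D-1$); hence $\mathbf{A}_{j}$ is nilpotent with $\mathbf{A}_{j}^{D}=\mathbf{0}$ and $\mathbf{A}_{j}^{D-1}=E_{\sigma_{j}(1),\sigma_{j}(D)}=E_{\sigma_{j}(1),j}$, the matrix unit with a single $1$ in position $(\sigma_{j}(1),j)$ (using $\sigma_{j}(D)=j$). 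Since $\det(\mathbf{M}_{j})=q^{D}$ by Theorem~\ref{thm:2}, the adjugate is the terminating Neumann series
\[
\mathrm{adj}(\mathbf{M}_{j})=q^{D}\mathbf{M}_{j}^{-1}=\sum_{m=0}^{D-1}(-1)^{m}q^{\,D-1-m}\mathbf{A}_{j}^{m}\in\mathbb{Z}^{D\times D},
\qquad
\mathrm{adj}(\mathbf{M}_{j})\equiv(-1)^{D-1}E_{\sigma_{j}(1),j}\pmod{q}.
\]

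The inclusion $q^{D}\mathbb{Z}^{D}\subseteq\bigcap_{j}\mathbf{M}_{j}\mathbb{Z}^{D}$ is immediate, since $q^{D}\mathbf{M}_{j}^{-1}=\mathrm{adj}(\mathbf{M}_{j})$ is integral. For the reverse inclusion I would prove by induction on $k=1,\dots,D$ that $\bigcap_{j}\mathbf{M}_{j}\mathbb{Z}^{D}\subseteq q^{k}\mathbb{Z}^{D}$: given $v$ in the intersection, write $v=q^{k-1}w$ with $w\in\mathbb{Z}^{D}$ (the hypothesis); then $\mathbf{M}_{j}^{-1}v\in\mathbb{Z}^{D}$ forces $\mathrm{adj}(\mathbf{M}_{j})w\equiv\mathbf{0}\pmod{q^{\,D-k+1}}$, in particular modulo $q$ (as $k\le D$), and reading off the coefficient of $e_{\sigma_{j}(1)}$ in $(-1)^{D-1}w_{j}e_{\sigma_{j}(1)}$ gives $w_{j}\equiv0\pmod{q}$; letting $j$ run over $1,\dots,D$ yields $w\in q\mathbb{Z}^{D}$, i.e.\ $v\in q^{k}\mathbb{Z}^{D}$. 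Taking $k=D$ gives $\bigcap_{j}\mathbf{M}_{j}\mathbb{Z}^{D}=q^{D}\mathbb{Z}^{D}$, hence $|\det(\mathrm{lcrm}(\mathbf{M}_{i,1},\dots,\mathbf{M}_{i,D}))|=q_{i}^{D^{2}}$, matching the value verified by computer for $D\le75$.

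Assembling the pieces, $\mathbf{R}_{D}$ in (\ref{RD}) is an lcrm of the whole family for every $D\ge2$, so the restriction $D\le75$ in Theorem~\ref{thm5} is actually unnecessary and the conjecture holds. The only place the cyclic structure is used is the closing line of the induction: one needs the column index of the single nonzero entry of $\mathbf{A}_{j}^{D-1}$ to be $j$ itself, so that ``$w_{j}\equiv0$ for all $j$'' exhausts every coordinate — this holds for cyclic permutations and, more generally, for any feasible permutation set of maximal size $D$. The approach is clean and I do not foresee a serious obstruction; the step most in need of care is pinning down $\mathbf{A}_{j}^{D-1}=E_{\sigma_{j}(1),j}$ and the fact that all lower-order terms of $\mathrm{adj}(\mathbf{M}_{j})$ vanish modulo $q$, a purely combinatorial statement about a directed path on $D$ vertices on which the entire lower bound rests.
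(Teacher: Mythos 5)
You should first be aware that the paper offers no proof of this statement: it is posed as a conjecture precisely because the authors could only verify the key determinant identity $|\det(\text{lcrm}(\mathbf{M}_{i,j},1\leq j\leq D))|=q_i^{D^2}$ by symbolic computation for $3\leq D\leq 75$ (Theorem \ref{thm5}), and the argument surrounding (\ref{RD}) then combines this with the divisibility lower bound on $|\det(\mathbf{R})|$ and the fact that $\mathbf{R}_D$ is a crm of the right determinant. Your proposal replaces the computational ingredient by a uniform lattice argument, and as far as I can check it is correct and complete for every $D\geq 2$: the identification of right multiples with lattice inclusions $\mathbf{R}\mathbb{Z}^D\subseteq\mathbf{M}\mathbb{Z}^D$ is consistent with the paper's definitions in 3)--5) of Section \ref{s2}; the matrix $\mathbf{A}_j$ is indeed the adjacency matrix of the directed Hamiltonian path $\sigma_j(1)\to\cdots\to\sigma_j(D)$, so the walk-counting identities $\mathbf{A}_j^D=\mathbf{0}$ and $\mathbf{A}_j^{D-1}=E_{\sigma_j(1),\sigma_j(D)}=E_{\sigma_j(1),j}$ hold; the terminating Neumann series for $q^D\mathbf{M}_j^{-1}=\mathrm{adj}(\mathbf{M}_j)$ checks out and gives $\mathrm{adj}(\mathbf{M}_j)\equiv(-1)^{D-1}E_{\sigma_j(1),j}\pmod q$; and the induction on $k$ correctly extracts $w_j\equiv 0\pmod q$ for every $j$ (here you rightly need the feasible permutation set to have maximal size $D$ so that the last components $j$ exhaust $\{1,\dots,D\}$, which the cyclic set (\ref{cyc}) does), yielding $\bigcap_j\mathbf{M}_{i,j}\mathbb{Z}^D=q_i^D\mathbb{Z}^D$ and hence $\bigcap_{i,j}\mathbf{M}_{i,j}\mathbb{Z}^D=\mathbf{R}_D\mathbb{Z}^D$. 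This simultaneously reproves Theorems \ref{thm:4} and \ref{thm5} without symbolic computation and settles the conjecture; it also explains the computed value $q_i^{D^2}$ as the lattice index $[\mathbb{Z}^D:q_i^D\mathbb{Z}^D]$. The only step I would urge you to write out fully in a final version is the verification that $(q\mathbf{I}+\mathbf{A}_j)\sum_{m=0}^{D-1}(-1)^mq^{-m-1}\mathbf{A}_j^m=\mathbf{I}$ (which uses $\mathbf{A}_j^D=\mathbf{0}$), since the entire lower bound rests on the resulting congruence for the adjugate.
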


If we can add any other integer
matrix $\mathbf{M}$ with $|\det(\mathbf{M})|=q_i^d$
for some integers $i$ and $d$, $1\leq i\leq L$ and $1\leq d\leq D$, 
which is not included in our constructed family,  to our constructed family,
$\mathbf{R}_D$ is still an lcrm of these $DL+1$ matrices, i.e.,
$\text{lcrm}(\mathbf{R}_D,\mathbf{M})=\mathbf{R}_D$,
since in this case, $\mathbf{R}_D$ is a right multiple of $\mathbf{M}$. 
This implies that for all $D\times D$ integer
matrices with their determinant absolute values 
$q_i^{d}$ for  $1 \leq i \leq L, 1\leq d\leq D$,
no matter they are pairwise co-prime or not,
$\mathbf{R}_D$ is an lcrm of all these $D\times D$ integer matrices.

In the meantime, from the above discussions, it is not hard to see that
if any member in our constructed family is removed  from the
family, the lcrm of the newly formed family of integer matrices
has strictly less determinant absolute value than $|\det(\mathbf{R}_D)|$,
since in this case  $q_i^D$ for some $i$, $1\leq i\leq L$, will not be included
in $|\det(\mathbf{R}_D)|$. Thus, we have proved the following corollary.

\begin{cor}\label{cor4}
  The constructed family $\mathbf{M}_{i,j}$, $1\leq i\leq L, 1\leq j\leq D$,
  is the smallest family of integer matrices
  with determinant absolute values $q_i^d$ for
  $1\leq i\leq L$ and $1\leq d\leq D$ such that
  $\mathbf{R}_D$ in (\ref{RD}) is their lcrm.
  \end{cor}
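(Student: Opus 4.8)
The plan is to establish two facts: (i) every family of integer matrices meeting the stated determinant constraint and having $\mathbf{R}_D$ as an lcrm must contain at least $DL$ matrices, so the constructed family (which has exactly $DL$ members, each of determinant absolute value $q_i^D$ by Theorem \ref{thm:2}, and lcrm $\mathbf{R}_D$ by Theorems \ref{thm:4} and \ref{thm5}) is of minimum cardinality; and (ii) deleting any single member $\mathbf{M}_{i_0,j_0}$ from it produces a family whose lcrm has determinant absolute value strictly less than $|\det\mathbf{R}_D|$, so the constructed family is also minimal under inclusion. The structural tool throughout is the divisibility $|\det\text{lcrm}(\mathbf{A}_1,\dots,\mathbf{A}_k)|$ divides $\prod_{l=1}^{k}|\det\mathbf{A}_l|$, which follows by applying the grouping identity (\ref{lcrm1}) inductively together with the identity $|\det\mathbf{M}|\,|\det\mathbf{N}|=|\det\text{lcrm}(\mathbf{M},\mathbf{N})|\,|\det\text{gcld}(\mathbf{M},\mathbf{N})|$; the latter is the sublattice-index identity $[\mathbb{Z}^D:A\cap B]\,[\mathbb{Z}^D:A+B]=[\mathbb{Z}^D:A]\,[\mathbb{Z}^D:B]$ applied to the column lattices $A,B$ of $\mathbf{M},\mathbf{N}$ (and in particular it gives $|\det\text{lcrm}|=$ product of the $|\det\mathbf{A}_l|$ when the determinant absolute values are pairwise co-prime, since then the gcld's are unimodular). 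I expect this identity, cited from \cite{smith2,MD2} or argued directly, to be the only nonroutine input; everything else is a short counting argument.

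First I would fix an arbitrary family $\mathcal{F}$ of non-singular integer matrices such that each $\mathbf{A}\in\mathcal{F}$ satisfies $|\det\mathbf{A}|=q_i^{d_{\mathbf{A}}}$ for some $1\le i\le L$ and $1\le d_{\mathbf{A}}\le D$, and such that some lcrm of $\mathcal{F}$ has determinant absolute value $|\det\mathbf{R}_D|=(q_1q_2\cdots q_L)^{D^2}$ (cf. (\ref{detRD})). Since the $q_i$ are pairwise co-prime and exceed $1$, the index $i$ attached to $\mathbf{A}$ is unique, so $\mathcal{F}$ is the disjoint union $\mathcal{F}_1\cup\cdots\cup\mathcal{F}_L$, where $\mathcal{F}_i$ collects the matrices whose determinant absolute value is a power of $q_i$. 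By (\ref{lcrm1}) an lcrm of $\mathcal{F}$ is an lcrm of $\text{lcrm}(\mathcal{F}_1),\dots,\text{lcrm}(\mathcal{F}_L)$; each $|\det\text{lcrm}(\mathcal{F}_i)|$ is a power of $q_i$, so these $L$ matrices have pairwise co-prime determinant absolute values and are hence pairwise co-prime by item 4) of Section \ref{s2}, giving $|\det\mathbf{R}_D|=\prod_{i=1}^{L}|\det\text{lcrm}(\mathcal{F}_i)|$. Comparing prime-power factorizations forces $|\det\text{lcrm}(\mathcal{F}_i)|=q_i^{D^2}$ for every $i$.

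Now the counting step: since $|\det\text{lcrm}(\mathcal{F}_i)|$ divides $\prod_{\mathbf{A}\in\mathcal{F}_i}|\det\mathbf{A}|=q_i^{\sum_{\mathbf{A}\in\mathcal{F}_i}d_{\mathbf{A}}}$, we get $\sum_{\mathbf{A}\in\mathcal{F}_i}d_{\mathbf{A}}\ge D^2$, and as each $d_{\mathbf{A}}\le D$ this yields $|\mathcal{F}_i|\ge D$; summing over $i$ gives $|\mathcal{F}|\ge DL$. The constructed family has exactly $DL$ members and $\mathbf{R}_D$ as an lcrm, so it attains this minimum. For minimality under inclusion, remove a member $\mathbf{M}_{i_0,j_0}$: in the reduced family the block indexed by $i_0$ is $\{\mathbf{M}_{i_0,j}:j\ne j_0\}$, which has $D-1$ matrices each with $d_{\mathbf{A}}=D$, so $\sum d_{\mathbf{A}}=(D-1)D<D^2$, forcing $|\det\text{lcrm}|$ of that block to be a proper divisor of $q_{i_0}^{D^2}$; by the multiplicativity noted above, the lcrm of the reduced family then has determinant absolute value strictly below $|\det\mathbf{R}_D|$, so $\mathbf{R}_D$ is no longer an lcrm. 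This completes the proof of the corollary.
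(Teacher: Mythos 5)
Your proposal is correct, and it is both more rigorous and more complete than the paper's own argument. The paper proves the corollary in two sentences preceding its statement: it observes that adding any further matrix with $|\det|=q_i^d$ leaves $\mathbf{R}_D$ an lcrm (since $\mathbf{R}_D$ is already a right multiple of such a matrix), and asserts that removing any $\mathbf{M}_{i_0,j_0}$ strictly decreases the lcrm determinant "since $q_{i_0}^D$ will not be included" --- an informal appeal to exactly the divisibility fact you make explicit, namely that $|\det\mathrm{lcrm}(\mathbf{A}_1,\dots,\mathbf{A}_k)|$ divides $\prod_l|\det\mathbf{A}_l|$, so the $q_{i_0}$-part of the reduced family's lcrm determinant is capped at $q_{i_0}^{D(D-1)}<q_{i_0}^{D^2}$. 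Supplying that lemma via the gcld--lcrm determinant identity (equivalently the lattice index identity, citable from \cite{smith2}) closes a genuine gap in the paper's terse reasoning, since as literally written the claim that "$q_{i_0}^D$ is not included" is unconvincing when $D-1$ matrices of determinant $q_{i_0}^D$ remain in the family. You also go further than the paper: your counting argument ($|\det\mathrm{lcrm}(\mathcal{F}_i)|=q_i^{D^2}$ divides $q_i^{\sum d_{\mathbf{A}}}$, each $d_{\mathbf{A}}\le D$, hence $|\mathcal{F}_i|\ge D$) establishes the cardinality bound $|\mathcal{F}|\ge DL$ over \emph{all} admissible families, whereas the paper's removal argument only shows the constructed family is minimal under inclusion; since "smallest family" most naturally means minimum cardinality, your version actually proves the stated corollary in its stronger reading. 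The only caveat is that the lattice interpretation of lcrm and gcld (intersection and sum of column lattices) is used without proof, but it is standard in the cited literature and you flag it appropriately.
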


As mentioned in 7) and 8) of Section \ref{s2}, when $\mathbf{M}_{i,j}$ are used as sampling matrices, $|\det(\mathbf{M}_{i,j})|$ are  their sampling rates
and $|\det(\mathbf{R_{D}})|$ is their dynamic range. Thus,
under the same sampling rates, our constructed
family is the smallest set of
sampling matrices to achieve the maximal dynamic range. 

From the above results, we also
see that the number of integer vectors that can be
uniquely determined from their  integer vector remainders modulo
the constructed integer matrix moduli $\mathbf{M}_{i,j}$,
$1\leq i\leq L, 1\leq j\leq D$,
using MD-CRT, i.e., the dynamic range, 
is $(q_1q_2\cdots q_L)^{D^2}$. Notice that since these matrix moduli
$\mathbf{M}_{i,j}$,
$1\leq i\leq L, 1\leq j\leq D$, do not commute, 
 they cannot be diagonalized
simultaneously in any sense. Thus, the corresponding MD-CRT cannot
be equivalently converted to multiple conventional CRTs for integers.
In other words, they are non-separable.
On the other hand, to have the same dynamic range as
that using multiple individual CRT for integers, i.e.,
the separable case, it is obvious
to construct diagonal integer matrix moduli
as, for $1\leq i\leq L$, 
\begin{equation}\label{sep1}
  \mathbf{D}_{i,1}= \mbox{diag}(q_i^D,1,\cdots,1),\cdots,
  \mathbf{D}_{i,D}= \mbox{diag}(1,\cdots,1,q_i^D). 
  \end{equation}
These diagonal integer matrices are clearly pairwise co-prime
and their product is their lcrm,  and thus  their
dynamic range is also $(q_1q_2\cdots q_L)^{D^2}$, the same as
that of ${\bf M}_{i,j}$, $1\leq i\leq L, 1\leq j\leq D$.
In the meantime, $\det(\mathbf{D}_{i,j})=q_i^D=\det(\mathbf{M}_{i,j})$,
i.e., the sampling rates of
$\mathbf{D}_{i,j}$ and $\mathbf{M}_{i,j}$ are the same as well 
for $1\leq i\leq L, 1\leq j\leq D$. Note that only 
for the convenience in comparison,  these sampling rates
are counted in the sense of
 overall $D$ dimensional volume-wise sampling rates for the multi-dimensional
sampling matrices applied to the $D$ dimensional
real vector $(t_1,t_2,\cdots,t_D)^{\top}$.
Below we analyze the sampling rates for each dimension $t_j$,
i.e., how fast a sampling of each continuous real variable $t_j$ is.

We first see that the ratios of the peak values and the
average values of the components in integer matrices $\mathbf{M}_{i,j}$ and
integer matrices  $\mathbf{D}_{i,j}$ are, respectively, 
\begin{equation}\label{ratio1}
  \gamma_{\mathbf{M}_{i,j}} =\frac{D^2 q_i}{Dq_i+D-1}
  \approx D,  \mbox{ when }q_i \mbox{ are large}, 
\end{equation}
and
\begin{equation}\label{ratio2}
  \gamma_{\mathbf{D}_{i,j}} =\frac{D^2q_i^D}{q_i^D+D-1}
\approx D^2, \mbox{ when }q_i \mbox{ are large}. 
\end{equation}
Also,  the ratios of the peak values and the smallest non-zero absolute values 
 of the components in integer matrices $\mathbf{M}_{i,j}$ and
 integer matrices  $\mathbf{D}_{i,j}$ are, respectively,
 $q_i$ and $q_i^D$. 
 Clearly the above two ratios of  $\mathbf{M}_{i,j}$
 are smaller than those of $\mathbf{D}_{i,j}$.

 We next consider the sampling rates on each dimension.
 For the above diagonal sampling matrices
 ${\bf D}_{i,j}$, for each $i$ the diagonal element $q_i^D$ means that the sampling
 rate in dimension $j$ is $q_i^D$  that could be too high in
 practice when  $q_i$ and (or) $D$  are (is) large.

 For the  newly proposed sampling matrices $\mathbf{M}_{i,j}$ constructed
 in Algorithm \ref{alg:1}, where all the elements in $\mathbf{M}_{i,j}$ are
 non-negative, from 1) in Section \ref{s2} one can see that
 FPD   $\mathcal{N}(\mathbf{M}_{i,j})$
 (or the component-wise inverses of its elements) corresponds to 
 the unit spatial volume of $\mathbb{R}^{D}$ in the multi-dimensional sampling.
 We next show that for any dimension $k$, any line $\mathcal{L}_k$  included in
 set $\mathcal{N}(\mathbf{M}_{i,j})$,  that is parallel to the $k$-th
 dimensional coordinate axis of variable  $t_k$,
 has the largest value not above $q_i+1$ and
the smallest value not below $0$. 
 Since all elements in $\mathbf{M}_{i,j}$ are not negative,
 the smallest value on line $\mathcal{L}_k$ is not below $0$. Next, we show that the largest value on line $\mathcal{L}_k$ is not above  $q_i+1$.
 This means that the sampling rate in dimension
 $k$ for continuous variable $t_k$ is not larger than $q_i+1$ that is much
 smaller than the largest sampling rate $q_i^D$ for the above diagonal
 sampling matrix $\mathbf{D}_{i,j}$.

For line  $\mathcal{L}_k$ included in  $\mathcal{N}(\mathbf{M}_{i,j})$ that is parallel to the $k$-th dimensional coordinate axis, let the two ending points of this line
be  $\mathbf{x}_{l}=[x_{1},\cdots,x_{k,l},\cdots,x_D]^{\top}$ for $l=1,2$.
To show the largest value on the line $\mathcal{L}_k$ is smaller than $q_i+1$,
we only need to show that $x_{k,1}$ and $x_{k,2}$ are smaller than $q_i+1$.
From the definition of $\mathcal{N}(\mathbf{M}_{i,j})$
in 1) in Section \ref{s2}, there must be two vectors $\mathbf{a}=[a_1,a_2,\cdots,a_D]^{\top}$ and $\mathbf{b}=[b_1,b_2,\cdots,b_D]^{\top}$
in $[0,1)^D$ 
such that
\begin{equation}\label{samp1}
\mathbf{M}_{i,j}\mathbf{a}= \mathbf{x}_{1} \quad \text{and} \quad \mathbf{M}_{i,j}\mathbf{b}= \mathbf{x}_{2}.   
\end{equation}

From the construction of $\mathbf{M}_{i,j}$, the $k$-th row of $\mathbf{M}_{i,j}$ may have no $1$ or have a single $1$.

If the $k$-th row of $\mathbf{M}_{i,j}$ has no $1$, we can get $q_i a_k=x_{k,1}$ and $q_i b_k=x_{k,2}$ from (\ref{samp1}). Since $0\leq a_k,b_k < 1 $, we can obtain that $x_{k,1}$ and $x_{k,2}$ are smaller than  $q_i$.

If the $k$-th row of $\mathbf{M}_{i,j}$ has a single $1$, without loss of generality, let $\sigma_j=(m_1,\cdots,m_{D-1},m_{D})$. There must be an $i$,
$1 \leq i \leq D-1$, such that $k=m_i$. 
Then, from (\ref{samp1}) we can get  
\begin{equation}
    a_{m_{i+1}}+q_i a_{m_{i}} = x_{k,1}, \quad \text{and} \quad
    b_{m_{i+1}}+q_i b_{m_{i}} = x_{k,2}.
\end{equation}
Since $0\leq a_{m_{i+1}},a_{m_{i}},b_{m_{i+1}},b_{m_{i}} < 1 $, we conclude
that $x_{k,1}$ and $x_{k,2}$ are smaller than $q_i+1$. This completes the proof. 

 In summary, the above analysis
 tells us that for any $i$, $1\leq i\leq L$,
 the sampling rate in each dimension of the sampling
 matrix $\mathbf{M}_{i,j}$ is no larger than $q_i+1$  for any $j$,
 $1\leq j\leq D$,  while
 the sampling rate in one dimension of the sampling matrix
 $\mathbf{D}_{i,j}$ is  $q_i^D$.
 This shows an advantage of non-separable
  sampling over separable sampling for a multi-dimensional signal.
 Next, for the illustration convenience,
 we only take a two dimensional  example to show the sampling rate analysis result on each
 dimension.

\begin{example}\label{ex:1}
Let $q_i=3$, the diagonal $2\times 2$ sampling matrices are  
\[
\mathbf{D}_{i,1} = \begin{pmatrix} 9 & 0 \\ 0 & 1 \end{pmatrix} \quad \text{and} \quad \mathbf{D}_{i,2} = \begin{pmatrix} 1 & 0 \\ 0 & 9 \end{pmatrix},
\]
and the sampling matrices $\mathbf{M}_{i,j}$  are 
\[
\mathbf{M}_{i,1} = \begin{pmatrix} 3 & 0 \\ 1 & 3 \end{pmatrix} \quad \text{and} \quad \mathbf{M}_{i,2} = \begin{pmatrix} 3 & 1 \\ 0 & 3 \end{pmatrix}.
\]
The FPDs of these four matrices,
i.e., $\mathcal{N}(\mathbf{M}_{i,j})$ and
$\mathcal{N}(\mathbf{D}_{i,j})$,  are shown in Fig \ref{fig:ex}.
From the figure, we can easily
see that the maximal sampling rate  of $\mathbf{M}_{i,j}$
on each dimension 
is $4=q_i+1$, while the maximal sampling rate of $\mathbf{D}_{i,j}$
on each dimension is $9=q_i^2$.

\begin{figure}[htbp]
    \centering
    \includegraphics[width=\columnwidth]{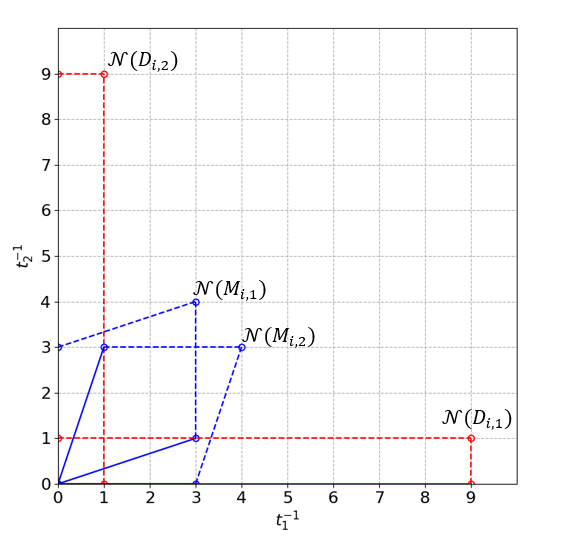}  
    \caption{{FPDs of the four matrices in Example \ref{ex:1}}}
    \label{fig:ex}  
\end{figure}
\end{example}

\section{Conclusion}\label{s6}

In this paper, we have presented a new construction of
pairwise co-prime integer matrices of any dimension and large size. They
are non-commutative and have low ratios of peak absolute values
over mean absolute values (or the smallest non-zero absolute values)
of their components. We have
also determined their least common right multiple (lcrm) with a closed and simple form.
These integer matrices have applications in MD-CRT
to determine integer vectors from their 
integer vector remainders, which may occur
in undersamplings of multi-dimensional
harmonic signals. Although the dynamic range of these non-diagonal 
integer matrices using MD-CRT can be achieved by diagonal integer
matrices using separable CRT for each dimension, 
their sampling rates  in each dimension
are much smaller than the conventional ones.
In other words, non-separable sampling has a  true
advantage over separable sampling for a multi-dimensional
signal.
This means that the newly constructed  pairwise co-prime integer matrices
may also have applications in multi-dimensional sparse
sensing and multi-dimensional
multirate systems. Since any new construction of
families of pairwise co-prime objects, such as co-prime
integers, co-prime algebraic numbers, and 
co-prime integer matrices, is fundamental, 
we believe that the new families of pairwise co-prime
integer matrices presented in
this paper may have other applications as well.


\begin{thebibliography}{99}

\bibitem{crt}
N. S. Szabo and R. I. Tanaka, \emph{Residue Arithmetic and Its Application to Computer Technology}, New York, NY, USA: McGraw-Hill, 1967.

\bibitem{crt1}
C. Ding, D. Pei, and A. Salomaa, \emph{Chinese Remainder Theorem: Applications in Computing, Coding, Cryptography}. Singapore: World Scientific, 1996.

\bibitem{crt4}
V. Guruswami, A. Sahai, and M. Sudan, ``Soft-decision decoding of
Chinese remainder codes,'' in {\em Proc. 41st IEEE Symp. Foundations
Computer Science}, Redondo Beach, CA, 2000, pp. 159--168.

\bibitem{crt2}
J. H. McClellan and C. M. Rader, {\em Number Theory in Digital Signal
Processing}. Englewood Cliffs, NJ, USA: Prentice-Hall, 1979.

\bibitem{conv1}
H. Krishna, K.-Y. Lin, and B. Krishna, ``Rings, fields, the Chinese remainder
theorem and an extension--Part II: applications to digital signal processing,''
{\em IEEE Trans. Circuits Syst. II}, vol. 41, no. 10, pp. 656--668, 1994.

\bibitem{crt_integer1}
H. Krishna, B. Krishna, K.-Y. Lin, and J.-D. Sun,
{\em Computational Number Theory and Digital Signal Processing: Fast Algorithms and Error Control Techniques},
Boca Raton, FL, USA: CRC, 1994.

\bibitem{radar_book}
M. A. Richards,
{\em Fundamentals of Radar Signal Processing}, 
New York, NY, USA: McGraw-Hill, 2005.

\bibitem{sg}
X.-G. Xia, ``On estimation of multiple frequencies in undersampled complex valued waveforms," \emph{IEEE Trans. Signal Process.}, vol. 47, no. 12, pp. 3417--3419, Dec. 1999.

\bibitem{gangli1}
G. Li, J. Xu, Y.-N. Peng, and X.-G. Xia,
``Location and imaging of moving targets using non-uniform linear antenna array,''
{\em IEEE Trans. Aerosp. Electron. Syst.}, vol. 43, no. 3, pp. 1214--1220, 2007.

\bibitem{fft2}
A. Christlieb, D. Lawlor, and Y. Wang, ``A multiscale sub-linear time Fourier algorithm for noisy data,''
{\em Appl. Comput. Harmon. Anal.}, vol. 40, no. 3, pp. 553--574, 2016.

\bibitem{wenchaoli}
W. C. Li, X. Z. Wang, and B. Moran,
``Wireless signal travel distance estimation using non-coprime wavelengths,''
{\em IEEE Signal Process. Lett.}, vol. 24, no. 1, pp. 27--31, Jan. 2017.

\bibitem{radeee}
B. Silva and G. Fraidenraich,
``Performance analysis of the classic and robust Chinese remainder theorems in pulsed Doppler radars,''
{\em IEEE Trans. Signal Process.}, vol. 66, no. 18, pp. 4898--4903, Sep. 2018.

\bibitem{xiaoli}
L. Xiao and X.-G. Xia,
``Frequency determination from truly sub-Nyquist samplers based on robust Chinese remainder theorem,''
{\em Signal Process.}, vol. 150, pp. 248--258, Sep. 2018.

\bibitem{congling1}
C. Li, L. Gan, and C. Ling,
``Coprime sensing via Chinese remaindering over quadratic fields--Part I: Array designs,"
{\em IEEE Trans.  Signal Process.}, vol. 67, no. 11,
pp. 2898--2910, Jun. 2019.

\bibitem{congling2}
C. Li, L. Gan, and C. Ling,
``Coprime sensing via Chinese remaindering over quadratic fields--Part II:
Generalizations and applications,''
{\em IEEE Trans.  Signal Process.}, vol. 67, no. 11,
pp. 2911--2922, Jun. 2019.

\bibitem{congling3}
C. Li, L. Gan, and C. Ling,
``3D coprime arrays in sparse sensing,''
in {\em Proc. IEEE Int. Conf. Acoust., Speech, Signal Process. (ICASSP)}, 
Brighton, UK, May 2019, pp. 4200--4204.

\bibitem{lugan2}
Y. Gong, L. Gan and H. Liu, ``Multi-channel modulo samplers constructed
from Gaussian integers,'' 
{\em  IEEE Signal Process. Lett.}, vol. 28, pp. 1828--1832, Aug.
2021.

\bibitem{smith1}
H. J. S. Smith, ``On systems of linear indeterminate equations and congruences,''  {\em  Philos. Trans. Roy. Soc. London}, vol. 151, pp. 293--326, Dec. 1861.

\bibitem{matrix}
C. C. MacDuffee, {\em The Theory of Matrices},
New York, NY, USA: Chelsea, 1946.

\bibitem{MDSP}
D. E. Dudgeon and R. M. Mersereau,
{\em Multidimensional Signal Processing},
Englewood Cliffs, NJ, USA: Prentice Hall,  1984. 
    
\bibitem{jiawenxian}
A. Guessoum and R. Mersereau, ``Fast algorithms for the multidimensional discrete Fourier transform,''
{\em IEEE Trans. Acoust. Speech Signal Process.}, vol. 34, no. 4, pp. 937--943,  Aug. 1986.

\bibitem{smith2}
T. Chen and P. P. Vaidyanathan, ``The role of integer matrices in multidimensional multirate systems,'' 
{\em IEEE Trans. Signal Process.}, vol. 41, no. 3, pp. 1035--1047, Mar. 1993.

\bibitem{smith3}
T. Chen and P. P. Vaidyanathan, ``Recent developments in multidimensional multirate systems,'' {\em  IEEE Trans. Circuits Syst. Video Technol.}, vol. 3, no. 2, pp. 116--137, Apr. 1993.

\bibitem{smith4}
P. P. Vaidyanathan, {\em Multirate Systems and Filter Banks}, 
Englewood Cliffs, NJ, USA:
Prentice Hall, 1993.
  
\bibitem{PPV1}
Y.-P. Lin, S.-M. Phoong, and P. P. Vaidyanathan, ``New results on multidimensional Chinese remainder theorem,''
{\em IEEE Signal Process. Lett.}, vol. 1, no. 11, pp. 176--178, Nov. 1994.

\bibitem{PPV2}
P. P. Vaidyanathan and P. Pal, ``Sparse sensing with co-prime samplers and arrays,''
{\em IEEE Trans. Signal Process.}, vol. 59, no. 2, pp. 573--586, Feb. 2011.

\bibitem{pp_matrix1}
P. Pal and P. P. Vaidyanathan, ``Coprimality of certain families of integer matrices,'' \emph{IEEE Trans. Signal Process.}, vol. 59, no. 4, pp. 1481--1490, Apr. 2011.

\bibitem{pp_matrix2}
P. P. Vaidyanathan and P. Pal, ``Generating new commuting coprime matrix pairs from known pairs,'' \emph{IEEE Signal Process. Lett.}, vol. 18, no. 5, pp. 303--306, May 2011.

\bibitem{pp_matrix3}
P. P. Vaidyanathan and P. Pal, ``A general approach to coprime pairs of matrices, based on minors,'' \emph{IEEE Trans. Signal Process.}, vol. 59, no. 8, pp. 3536--3548, Aug. 2011.

\bibitem{coprime_pp}
P. P. Vaidyanathan and P. Pal, ``Theory of sparse coprime sensing in multiple dimensions,''
{\em IEEE Trans. Signal Process.}, vol. 59, no. 8, pp. 3592--3608, Aug. 2011.

\bibitem{pp_nested}
P. Pal and P. P. Vaidyanathan, ``Nested arrays in two dimensions, Part I: Geometrical considerations,'' \emph{IEEE Trans. Signal Process.}, vol. 60, no. 9, pp. 4694--4705, Sep. 2012.

\bibitem{MD1}
L. Xiao, X.-G. Xia, and Y.-P. Wang, ``Exact and robust reconstructions of integer vectors based on multidimensional Chinese remainder theorem (MD-CRT)," \emph{IEEE Trans. Signal Process.}, vol. 68, pp. 5349--5362, Sep. 2020.

\bibitem{MD2}
L. Xiao, H. Huo, and X.-G. Xia, ``Robust multidimensional Chinese remainder theorem for integer vector reconstruction," \emph{IEEE Trans. Signal Process.}, vol. 72, pp. 2364--2380, May 2024.

\bibitem{compimage}
C. A. Bouman, \emph{Foundations of Computational Imaging: A Model-Based Approach}. Philadelphia, PA, USA: SIAM, 2022.


\bibitem{sensor}
Q. Wu, F. Sun, P. Lan, G. Ding, and X. Zhang, ``Two-dimensional direction-of-arrival estimation for co-prime planar arrays: A partial spectral search approach,'' \emph{IEEE Sensors J.}, vol. 16, no. 14, pp. 5660--5669, Jul. 2016.

\bibitem{xiaopingli}
X. Li, S. Sun, Q. Liao, and X.-G. Xia, ``Maximum likelihood estimation based complex-valued robust Chinese remainder theorem and its fast algorithm,'' \emph{arXiv preprint arXiv:2503.18625}, Mar. 2025. [Online]. Available: \url{https://arxiv.org/abs/2503.18625}

\end{thebibliography}
\end{document}